\documentclass{article}
\usepackage{amssymb,amsfonts,amsmath,amsthm,amscd,dsfont,mathrsfs}
\usepackage{graphicx,graphics,epsfig,psfrag}
\usepackage[usenames,dvipsnames]{color,colortbl}
\usepackage{url,balance,cite}
\usepackage{array,multirow}
\usepackage{verbatim}
\usepackage{subfig}
\usepackage{xspace}

\usepackage{graphicx} 

\usepackage{algorithm}
\usepackage{algorithmic}

\usepackage{hyperref}

\newcommand{\ignore}[1]{}
\newcommand{\etal}{\textit{et al.}}
\newcommand{\eg}{e.g.}

\newcommand{\ie}{i.e.}

\newcommand{\ml}{Movielens\xspace}
\newcommand{\fl}{Flixster\xspace}

\newcommand{\techreport}[2]{#2}

\newcounter{packednmbr}
\newenvironment{packedenumerate}{\begin{list}{\thepackednmbr.}{\usecounter{packednmbr}\setlength{\itemsep}{0pt}\addtolength{\labelwidth}{-5pt}\setlength{\leftmargin}{\labelwidth}\setlength{\listparindent}{\parindent}\setlength{\parsep}{0pt}\setlength{\topsep}{3pt}}}{\end{list}}
\newenvironment{packeditemize}{\begin{list}{$\bullet$}{\setlength{\itemsep}{0pt}\addtolength{\labelwidth}{-5pt}\setlength{\leftmargin}{\labelwidth}\setlength{\listparindent}{\parindent}\setlength{\parsep}{0pt}\setlength{\topsep}{3pt}}}{\end{list}}
\newenvironment{packeddescription}[1][0pt]
  {\list{}{\labelwidth=0pt \leftmargin=#1 \setlength{\parsep}{0pt}\setlength{\topsep}{3pt}
   }}
  {\endlist}

\makeatletter{}
\def\<{\langle}
\def\>{\rangle}
\def\reals{{\mathbb R}}
\def\naturals{{\mathbb N}}
\def\cV{{\cal V}}

\def\cY{{\cal Y}}

\def\cX{{\cal X}}
\def\cL{{\cal L}}
\def\Pm{{\rm P}}
\def\Qm{{\rm Q}}
\def\Em{{\rm E}}
\def\ed{\stackrel{{\rm d}}{=}}
\def\hx{\widehat{x}}
\def\hr{\widehat{r}}
\def\bhx{\widehat{\bf x}}
\def\cR{{\cal R}}
\def\sT{{\sf T}}
\def\id{{\rm I}}

\newtheorem{theorem}{Theorem}\newtheorem{definition}{Definition}

\newtheorem{lemma}{Lemma}

\definecolor{red}{RGB}{255,0,0}

\definecolor{dred}{RGB}{210,20,50}

\definecolor{green}{RGB}{0,255,0}

\definecolor{dgreen}{RGB}{0,151,0}

\definecolor{lgreen}{RGB}{75,180,130}

\definecolor{blue}{RGB}{0,0,255}

\definecolor{lblue}{RGB}{30,160,240}

\definecolor{magenta}{RGB}{255,0,255}

\definecolor{orange}{RGB}{255,128,0}

\definecolor{dorange}{RGB}{255,165,0}

\definecolor{violet}{RGB}{207,74,221}

\definecolor{grey}{RGB}{191,191,191}

 \def\MSE{{\rm MSE}}

\newcommand{\eps}{\varepsilon}

\newcommand{\junk}[1]{}

\def\b0{{\bf 0}}

\def\ba{{\bf a}}
\def\bb{{\bf b}}

\def\bv{{\bf v}}

\def\bx{{\bf x}}

\newcommand{\leak}{\ensuremath{L}}
\newcommand{\obfs}{\ensuremath{Y}}
\newcommand{\argmin}{\ensuremath{\mathop{\arg\min}}}

\newcommand{\myset}[1]{\mathcal{#1}}

\begin{document}

\title{Privacy Tradeoffs in Predictive Analytics}

\author{Stratis Ioannidis$^*$, Andrea Montanari$^\dagger$, Udi Weinsberg$^*$,\\Smriti Bhagat$^*$, Nadia Fawaz$^*$, Nina Taft$^*$\\$^*$ Technicolor, $^\dagger$ Stanford University}
\maketitle
\begin{abstract}
\makeatletter{}Online services routinely mine user data to predict user preferences, make recommendations, and place  targeted ads. Recent research has demonstrated that several private user attributes (such as political affiliation, sexual orientation, and gender) can be inferred from such data. Can a privacy-conscious user  benefit from personalization while simultaneously protecting her private attributes? We study this question in the context of a rating prediction service based on matrix factorization. 
We construct a protocol of interactions between the service and users that has remarkable optimality properties: it is \emph{privacy-preserving}, in that no inference algorithm can succeed in inferring a user's private attribute with a probability better than random guessing; it has \emph{maximal accuracy}, in that no other privacy-preserving protocol improves rating prediction; and, finally, it involves a \emph{minimal disclosure}, as  the prediction accuracy strictly decreases when the service reveals less information. We extensively evaluate our protocol using several rating datasets, demonstrating that it successfully blocks the inference of gender, age and political affiliation, while incurring less than 5\% decrease in the accuracy of rating prediction.
 
\end{abstract}

\makeatletter{}
\section{Introduction}

Online users are routinely asked to provide feedback about their
preferences and tastes. Often, users give five-star ratings for movies, books, restaurants, or items they purchase, and ``like''  news articles, blog posts, pictures or other kinds of micro-content. Online services mine such feedback to predict users' future preferences, using techniques such as matrix factorization~\cite{candes2009exact,keshavan2010matrix,Koren:2009,Koren:2008}.
Such prediction can be employed to, e.g.,  make relevant product recommendations, to display targeted ads, or, more generally, personalize services offered; making accurate predictions is thus of fundamental importance to many online services.

Although users may willingly reveal, e.g., ratings to movies or ``likes'' to news articles and posts, there is an inherent privacy threat in this revelation.  To see this, consider the following general setting. An entity, which we call for concreteness the \emph{analyst},  has  a dataset of ratings given by users to a set of items (e.g., movies).  A private attribute of some users, such as their gender, age, political affiliation, etc., is also in the dataset. The analyst uses this dataset to offer a \emph{recommendation service}. Specifically, the analyst solicits ratings from new users; using these ratings, it predicts how these users would rate other items in its dataset (e.g., via matrix factorization techniques), and recommends items they are likely to rate highly. New users are \emph{privacy-conscious}: they want to receive relevant recommendations but do not want the analyst to learn their private attribute. However, having access to the above dataset, the analyst can potentially \emph{infer} the private attribute from the ratings they reveal.

The success of such inference clearly depends on how a user's feedback (i.e., her ratings) relates to her private attribute, and whether this correlation is evident in the dataset. 
Recent studies report many examples where strong correlations have been found: attributes successfully inferred from ratings or ``likes''  include political affiliation~\cite{kosinski2013private,salman:2013}, sexual orientation~\cite{kosinski2013private},   age~\cite{blurme:2012},  gender~\cite{blurme:2012,salman:2013}, and even drug use~\cite{kosinski2013private}.  Yet more privacy threats have been extensively documented in literature (see, e.g., \cite{Bhagat:2007,Mislove:2010,Otterbacher:2010,Rao:2010,Narayanan:2008}). It is therefore natural to ask \emph{how can a privacy-conscious user benefit from relevant recommendations, while preventing the inference of her private information}? Allowing this to happen is clearly desirable from the user's point of view. It also benefits  the analyst, as it incentivizes privacy-conscious individuals to use the recommendation service.

A solution proposed by many recent research efforts is to allow a user to distort her ratings before revealing them to the analyst~\cite{vaidya2005privacy,kasiviswanathan2011can,duchi2013local,calmon-allerton2012}.
This leads to a well-known tradeoff between \emph{accuracy} and \emph{privacy}: greater distortion yields better privacy but also less accurate prediction (and, hence, poorer recommendations).
We introduce for the first time a third dimension to this tradeoff, namely the  {\em information the analyst discloses to the users.}

To understand the importance of this dimension, consider the following hypothetical scenario. The analyst gives the privacy-conscious user an implementation of its rating prediction algorithm, as well as any data it requires--including, potentially, the full dataset at the analyst's disposal. The user can then execute this algorithm locally, identifying, e.g., which movies or news articles are most relevant to her. This would provide perfect privacy (as the user reveals nothing to the analyst) as well as maximal accuracy (since the user's ratings are not distorted). Clearly, this is untenable from the analyst's perspective, both for practical reasons (e.g., efficiency or code maintenance) and for commercial reasons: the analyst may be charging a fee for its services, and exposing such information publicly diminishes any competitive edge it may have.

The above hypothetical scenario illustrates that \emph{both} privacy \emph{and} accuracy can be trivially attained  when \emph{no constraints are placed on the information disclosed by the analyst}. On the other hand, such constraints are natural and necessary when the analyst's algorithms and data are proprietary.
 A natural goal is thus to determine the \emph{minimal} information the analyst needs to disclose to a privacy-conscious user, to enable a recommendation service that is both private and accurate. 
We make the following contributions:

\begin{packeditemize}
\item We introduce a novel mathematical framework to study issues of privacy, accuracy, and information disclosure when the analyst predicts ratings through matrix factorization (Section~\ref{sec:model}). In particular, we define a broad class of \emph{learning protocols} determining  the interactions between the analyst and a privacy-conscious user.  Each protocol specifies what information the analyst reveals, how the user distorts her ratings, and how the analyst uses this obfuscated feedback for rating prediction. \item We propose a simple learning protocol, which we call the \emph{midpoint} protocol (MP), and prove  it has remarkable optimality properties (Section~\ref{sec:MidpointProtocol}). First, it provides \emph{perfect privacy} w.r.t.~the user's private attribute: no inference algorithm predicts it better than random guessing.
Second, it yields \emph{optimal accuracy}: there is no privacy-preserving protocol allowing rating prediction at higher accuracy than MP.  Finally, the protocol involves a \emph{minimal disclosure}: any privacy-preserving protocol that discloses less information than MP  necessarily has a strictly worse prediction accuracy.
\item We extend our solution to handle common situations that occur in practice (Section~\ref{sec:Extensions}). We deal with the case where the user can only rate  a subset of the items for which the analyst solicits feedback: we provide a variant of MP, termed MPSS, and also establish its optimality in this setting. 
{We also discuss how the analyst can select the set of items for which to solicit ratings, and how the user can repeatedly interact with the analyst.}
\item We evaluate our proposed protocols on three datasets, protecting attributes such as user gender, age and political affiliation (Section~\ref{sec:evaluation}). We show that MP and MPSS attain excellent privacy: a wide array of inference methods are rendered no better than blind guessing, with an area-under-the-curve (AUC) below 0.55. This privacy is achieved with negligible impact (2-5\%) on rating prediction accuracy.  \end{packeditemize}

To the best of our knowledge, we are the first to take into account the data disclosed by an analyst  in the above privacy-accuracy tradeoff, and to establish the optimality of a combined disclosure, obfuscation, and prediction scheme. Our proofs rely on the modeling assumption that is the cornerstone of matrix factorization techniques and hence validated by vast empirical evidence (namely, that the user-item ratings matrix is approximately low-rank). Moreover, the fact that our algorithms successfully block inference against a barrage of different classifiers, some non-linear, 
further establishes our assumption's validity over real-world data.

\makeatletter{}\section{Related Work}\label{sec:RelatedWork}

\noindent \textbf{Threats.} Inference threats from user data have been extensively documented by several recent studies.
Demographic information has been successfully inferred from blog posts~\cite{Bhagat:2007}, search queries~\cite{Bi-www2013}, reviews~\cite{Otterbacher:2010}, tweets~\cite{Rao:2010}, and the profiles of one's Facebook friends~\cite{Mislove:2010}.
 In an extreme case of inference, Narayanan~\etal~\cite{Narayanan:2008}  show that disclosure of movie ratings  can lead to full de-anonymization (through a linkage attack), thus enabling unique identification of users.
Closer to our setting, Kosinski \emph{et al}.~\cite{kosinski2013private} show that several personality traits, including political views, sexual orientation, and drug use can be accurately predicted from  Facebook ``likes'', while Weinsberg \emph{et al.}~\cite{blurme:2012}  show that gender can be inferred from movie ratings  with close to 80\% accuracy. Salamatian~\etal~\cite{salman:2013} also show that political views can be inferred with confidence above $71\%$ by using only a user's ratings to her 5 most-watched TV shows.

\smallskip\noindent \textbf{Privacy-Preserving Data Mining and Information-Theoretic Models.}
Distorting data prior to its release to an untrusted analyst has a long history in the context of privacy-preserving data mining (see, e.g., \cite{warner1965randomized,vaidya2005privacy}). Distortion vs.~estimation accuracy tradeoffs have been studied in the context of several statistical tasks, such as constructing decision trees \cite{agrawal2000privacy}, clustering \cite{oliveira2003privacy,banerjee2012price},  and parameter estimation \cite{duchi2013local}.
 The outcome of such tasks amounts to learning an aggregate property from the distorted data of a user population. In contrast, we focus on estimating accurately a user profile to be used in matrix factorization, while keeping private any attribute she deems sensitive.

Our setting is closely related to the following information-theoretic problem \cite{yamamoto1983,calmon-allerton2012}. Consider  two dependent random variables $X$ and $Y$, where $X$ is to be released publicly while $Y$ is to be kept secret.  To prevent inference of $Y$ from the release, one can apply a distortion $f(X)$ on $X$; the goal is then to find the minimal distortion  so that the mutual information between $f(X)$ and $Y$ is below a threshold. This problem was originally addressed in the asymptotic regime \cite{yamamoto1983,Sankar-Poor-IFStrans2013}, while a series of recent works study it in a non-asymptotic setting~\cite{calmon-allerton2012,Makhdoumi2013privacy,salman:2013,rebollo2010t}. Broadly speaking, our work can be cast in this framework by treating a user's ratings as $X$, her private feature as $Y$, and  employing a correlation structure between them as specified by matrix factorization (namely, \eqref{eq:LinearModel}). Our definition of privacy then corresponds to zero mutual information (i.e., ``perfect'' privacy), and our protocol involves a minimal rating distortion.

We  depart from these studies of privacy vs.~accuracy (both in information-theoretic as well as the privacy-preserving data mining settings), by investigating a third axis, namely, the information disclosed by the analyst. To the best of our knowledge, our work is the first to characterize the disclosure extent necessary to achieve an optimal privacy-accuracy trade-off, an aspect absent from the aforementioned works.

\smallskip\noindent\textbf{Trusted Analyst.}
A different threat model than the one we study here considers a trusted analyst that aggregates data from multiple users in the clear. The analyst performs a statistical operation over the data,  distorts the output of this operation, and releases it publicly. The privacy protection gained by the distortion is therefore towards a third party that accesses the distorted output.  
The most common approach to quantifying privacy guarantees in this setting is through \emph{$\epsilon$-differential privacy} \cite{Dwork-McSherry-2006,dwork}. The statistical operations studied under this setting are numerous, including social recommendations~\cite{korolova}, covariance computation~\cite{mcsherry}, statistical estimation~\cite{smith2011privacy,dwork2009differential}, classification~\cite{chaudhuri2011differentially,rubinstein2012learning}, and principal component analysis~\cite{chaudhuri2013near}, to name a few. We differ in considering an untrusted analyst, and enabling a privacy-conscious user to interact with an analyst performing matrix factorization, rather than learning aggregate statistics.

\ignore{
Differential privacy~\cite{dwork}: indeed statistical independence is equivalent to guaranteeing differential privacy with respect to $x_0$ with $\eps=0$. The privacy-accuracy trade-off for the differentially private release of $d$ linear queries was studied in~\cite{Hardt-Talwar}. However, this paper does not factor in the constraint on the amount of information the querier is willing to disclose in the trade-off.
Several works have analyzed differential privacy for practical applications, \eg, data mining~\cite{Friedman}, recommender systems~\cite{mcsherry}, and social recommendations~\cite{korolova}. However, these works assume a trusted database owner and focus on making the output of the application differentially private.

For example, in \cite{korolova}, social recommendations based on a social graph are released in a differentially private manner to protect the privacy of links in the input social graph. Similarly, \cite{mcsherry} considers the case of a trusted recommendation system who has access to user ratings, but who releases differentially private recommendations to ensure the privacy of user ratings--- and a fortiori of sensitive information that might be linked to these ratings--- from a third-party adversary.
On the contrary, in this work we consider the case of an untrusted recommender system, to whom a user releases his ratings in a perfectly private manner to prevent the inference of some private information. Finally, none of the aforementioned works studies the privacy-accuracy trade-off while limiting the amount of information the querier is willing to disclose.
}

 \ignore{, and allows us to obtain a perfect privacy-preserving scheme in a simple closed-form. We also note that our notion of perfect privacy is consistent with the framework introduced in \cite{calmon-allerton2012}: in the limit of perfect privacy, statistical independence between private data and data to be released is ensured, which is equivalent to their mutual information being null. Last but not least, none of the aforementioned works studies the privacy-accuracy trade-off while minimizing the extent of data disclosure necessary, on the querier side, for obfuscation, which is amongst our main contributions.
}
\ignore{
In our work, we assume a much simpler parametric (linear) model,  but do not assume that its parameters are a priori known to the user; in fact, quantifying the extent of data disclosure necessary for obfuscation is one of our main contribution.
}

\ignore{
Several theoretical frameworks that model privacy against statistical inference under accuracy constraints have been proposed. In~\cite{yamamoto1983,sankar2011,calmon-allerton2012}, an asymptotic information-theoretic framework is developed to provide asymptotic guarantees on the average equivocation of the private data at the adversary when the number of data samples grows arbitrarily large. Non-asymptotic approaches were also proposed, using an approach similar to rate-distortion theory~\cite{rebollo2010} or using probabilistic models that link private data and data to be released~\cite{calmon-allerton2012}.
The problems with these approaches is that they require knowledge of the prior joint distribution between the private and public data, which may be impossible to obtain in a practical setting involving large scale data. Our assumption of a  linear model, which works in practice in recommender systems~\cite{Koren:2009}, renders the problem tractable, and allows us to obtain a perfect privacy-preserving scheme in a simple closed-form.
}

\ignore{
The work in \cite{rebollo2010} relied on a approach similar to rate-distortion theory to design privacy-preserving mechanisms.  In \cite{calmon-allerton2012}, the authors assumed a general probabilistic model linking private data and data to be released, and developed a general statistical inference framework to quantify average and worst-case privacy guarantee, under accuracy constraint on the released data. The privacy-preserving scheme is designed as a statistical mapping from the true data to be released to a distorted version of the data. The optimal mapping is obtained as the solution of a convex optimization problem, by minimizing an inference cost function, modeled as the mutual information between private information. Although general, the application of this framework requires knowledge of the prior joint distribution between the private data $x_0$ and the data to be released $r$, which may be hard to obtain in a practical setting involving large scale data. In this paper, the assumption of a simple linear model relating private data and data to be released, rather than a general probabilistic model, renders the problem tractable in practice, and allows us to obtain the perfect privacy-preserving scheme in a simple and closed-form. It should also be noted that the case of perfect privacy, on which we focus is this paper, is a limit case of the framework introduced in \cite{calmon-allerton2012}: indeed statistical independence between the data to be kept private $x_0$ and the data to be released $r$ is equivalent to a zero mutual information $I(x_0;r)=0$ between private data and  data to be released.
}

\makeatletter{}\section{Technical Background}\label{sec:background}
  In this section, we briefly review matrix factorization and the modeling assumptions that underlie it. We also highlight privacy challenges that arise from its application. 
\subsection{Matrix Factorization (MF)}\label{sec:mf}
Matrix factorization \cite{Koren:2009,candes2010power,keshavan2010matrix} addresses the following prediction problem. A data analyst has access to a dataset in which $N$ users rate subsets of $M$ possible items (e.g., movies, restaurants, news articles, etc.). For $[N] \equiv
\{1,\dotsc,N\}$ the set of users, and $[M]\equiv\{1,\dotsc,M\}$ the set of
items, we denote by $\myset{E}\subset [N]\times [M] $ the user-item
pairs with a rating in the dataset. For $(i,j)\in \myset{E}$,  
let $r_{ij}\in\reals$ be user $i$'s rating to
item~$j$.
Given  the dataset $\{(i,j,r_{i,j})\}_{(i,j)\in\myset{E}}$, the analyst wishes to
predict the ratings for user-item pairs $(i,j)\notin \myset{E}$. 

Matrix factorization attempts to solve this problem assuming that the $N\times M$ matrix comprising all ratings is \emph{approximately low-rank}. In particular, it is assumed that for some small
dimension $d\in \naturals$ there exist vectors 
$\bx_i, \bv_j\in\reals^d$,   termed the \emph{user} and \emph{item} \emph{profiles}, respectively, such that
\begin{align}\label{lowrank}
r_{ij} = \langle \bx_i,\bv_j\rangle 
 +\varepsilon_{ij}, \quad\text{for }i\in [N],j\in [M],
\end{align}
where the ``noise'' variables $\varepsilon_{ij}$ are zero mean, i.i.d.~random variables with finite variance, and
$\<\ba,\bb\>\equiv\sum_{k=1}^da_kb_k$
is the usual inner product in $\reals^d$.
Given the ratings $\{r_{ij},~(i,j)\in\myset{E}\}$, the user and item profiles are typically computed through the following least-squares estimation (LSE) \cite{Koren:2009}:
\begin{align}\label{regrmse}
\min_{\{\bx_i\}_{i\in[N]}, \{\bv_j\}_{j \in [M]}}
\textstyle  \sum_{(i,j)\in \myset{E}}(r_{ij}-\langle \bx_i,\bv_j\rangle)^2.\end{align}
Minimizing this square error is a natural objective. Moreover, when the noise variables in \eqref{lowrank} are Gaussian, \eqref{regrmse}  is equivalent  to maximum likelihood estimation of user and item profiles. 
 Note that, having solved \eqref{regrmse}, the analyst can predict the rating of user $i$ for item $j$ as:
\begin{equation}\label{prediction}
  \hat{r}_{ij}\equiv\langle \bx_i, \bv_j\rangle, \quad (i,j)\notin \mathcal{E}.
\end{equation}
where $\bx_i, \bv_j$ are the estimated profiles obtained from \eqref{regrmse}.

Unfortunately, the minimization \eqref{regrmse} is \emph{not} a convex optimization problem. Nevertheless, there exist algorithms that provably recover the correct user and item profiles, under appropriate assumptions \cite{candes2009exact,candes2010power,keshavan2010matrix}. Moreover, simple gradient descent or alternating least-squares techniques are  known to work very well in practice~\cite{Koren:2009}.

\subsection{Incorporating Biases}\label{sec:biases}
Beyond user ratings, the analyst often has additional ``contextual'' information about users in the dataset. For example, if users are not privacy-conscious, they may reveal features such as their gender, age or other demographic information along with their ratings. Such information is typically included in MF through \emph{biases} (see, e.g., \cite{Koren:2008,Koren:2009}). 

Suppose, for concreteness, that each user $i$ discloses a binary feature $x_{i0}\in\{-1,+1\}$, e.g., their gender or political affiliation. This information can be incorporated in MF by adapting the model \eqref{lowrank} as follows:
\begin{align}\label{withbias}
r_{ij} = \langle \bx_i,\bv_j\rangle  + x_{i0}v_{j0} +\varepsilon_{ij} = \langle x_i,v_j \rangle +\varepsilon_{ij} 
\end{align}
for all $i\in [N]$, $j\in [M]$, where $v_{j0}\in \reals$ is a type-dependent bias, and $x_i=(x_{i0},\bx)\in \reals^{d+1}$, $i\in[N]$,  $v_j = (v_{j0},\bv_j)\in \reals^{d+1}$, $j \in [M],$ are \emph{extended} user and item profiles, respectively. Under this modeling assumption, the analyst can estimate profiles and biases jointly by solving:
\begin{align}\label{regrmsewithbias}
\min_{\{\bx_i\}_{i\in[N]}, \{(v_{j0},\bv_j)\}_{j \in [M]}} 
  \textstyle \sum_{(i,j)\in \myset{E}}(r_{ij}-\langle x_i,v_j \rangle)^2  .\end{align}
 Note that this minimization can be seen as a special case of \eqref{regrmse}, in which  extended profiles have dimension $d+1$, and the first coordinate of $x_i$ is fixed to either $-1$ or $+1$ (depending on the user's binary feature $x_{i0}$).
In other words, the feature $x_{i0}$ can be treated as yet another feature of a user's profile, though it is \emph{explicit} (i.e., a priori known) rather than \emph{latent} (i.e., inferred through MF). Prediction can be performed again through
$\hat{r}_{ij} = \langle x_i,v_j \rangle$, for $(i,j)\notin \myset{E}$.

Intuitively, the biases 
 $v_{j0}$  gauge the impact
of the binary feature $x_{i0}$ on each user's ratings. Indeed, consider sampling a random user from a population, and let $x=(x_{0},\bx)$ be her profile, where $\bx$ comprises the features that are independent of $x_{0}$. Then, it is easy to check from \eqref{withbias} that her rating $r_j$ for item $j$ will be such that:
$$\Em\{r_{j}\mid x_0 = 1\}- \Em\{r_j\mid x_0= 0\} =2v_{j0},$$
where the expectation is over the noise in \eqref{withbias}, as well as the random sampling of the user.
Put differently, given access to ratings by users that are not privacy-conscious and have disclosed, e.g., their gender $x_0$, $v_{j0}$ corresponds to half the distance between the mean ratings for item $j$ among genders.

Additional  explicit binary features can be incorporated similarly, by adding one bias per feature in \eqref{regrmsewithbias} (see, e.g., \cite{Koren:2009}). Categorical features can also be added through binarization; for simplicity, we focus on a single binary feature, discussing multiple and categorical features in Section~\ref{appendix:categorical}.

\subsection{Prediction for Privacy-Conscious Users}\label{sec:newuser}
Consider a scenario in which the analyst has performed MF over a dataset of users explicitly revealing a binary feature, and has extracted  the extended profiles $v_j=(v_{j0},\bv_j)\in \reals^{d+1}$ for each item $j\in [M]$.  Suppose now a privacy-conscious user  joins the system and \emph{does not explicitly reveal her private binary feature $x_0$ to the analyst}.

 In such a ``cold-start'' situation, the analyst would typically solicit a batch of ratings $\{r_j\}_{j\in S}$ for some set $\mathcal{S}\subseteq M$. 
 Assume that the new user's ratings also follow the linear model \eqref{withbias} with extended profile $x=(x_0,\bx)\in \{-1,+1\}\times \reals^d$. Then, the analyst can (a) infer the user's extended profile $x$, and (b) predict her ratings for items in $[M]\setminus \myset{S}$ using the extended item profiles $\{v_j\}_{j\in \myset{S}}$ as follows.
First, the analyst can infer $x$ using through the LSE:
\begin{align}\label{jointmle}
\min_{x_0\in \{-1,+1\},\bx\in \reals^d}\textstyle \sum_{j\in \myset{S}}(r_j-\langle \bx,\bv_j\rangle - x_{0}v_{j0})^2.
\end{align}
The minimization \eqref{jointmle} can be computed in  time linear in~$|\myset{S}|$, by solving two linear regressions (one for each $x_0\in\{-1,+1\}$) and picking the solution $(x_0,\bx)$ that yields the smallest error \eqref{jointmle}.  Having obtained an estimate of the  extended profile $x$,
the analyst can predict ratings as $\hat{r}_j = \langle x,v_j\rangle$, for $j\notin \mathcal{S}$.

Beyond this LSE approach, the analyst can use a different classification algorithm to first infer the private feature $x_0$, such as logistic regression or support vector machines (SVMs). We refer the reader to, \emph{e.g.}, \cite{blurme:2012}, for the description of several such algorithms and their application over real rating data. Having an estimate of $x_0$, the analyst can proceed to solve \eqref{jointmle} w.r.t.~$\bx$ alone, which involves a single linear regression.

In both of the above approaches (joint LSE, or separate inference of $x_0$) the analyst infers the private feature $x_0$. Indeed, the LSE method \eqref{jointmle} is known to predict information such as gender or age with an accuracy between 65--83\% over real datasets \cite{agenda}; separate inference of the private information (through, e.g.,  logistic regression or SVMs) leads to  84-86\% accuracy~\cite{blurme:2012}. As such, by revealing her ratings, the user also reveals  $x_0$, albeit indirectly and unintentionally.

\makeatletter{}

{
\begin{figure}[!t]
\centering
\vspace{-6pt}
\includegraphics[width=0.6\columnwidth]{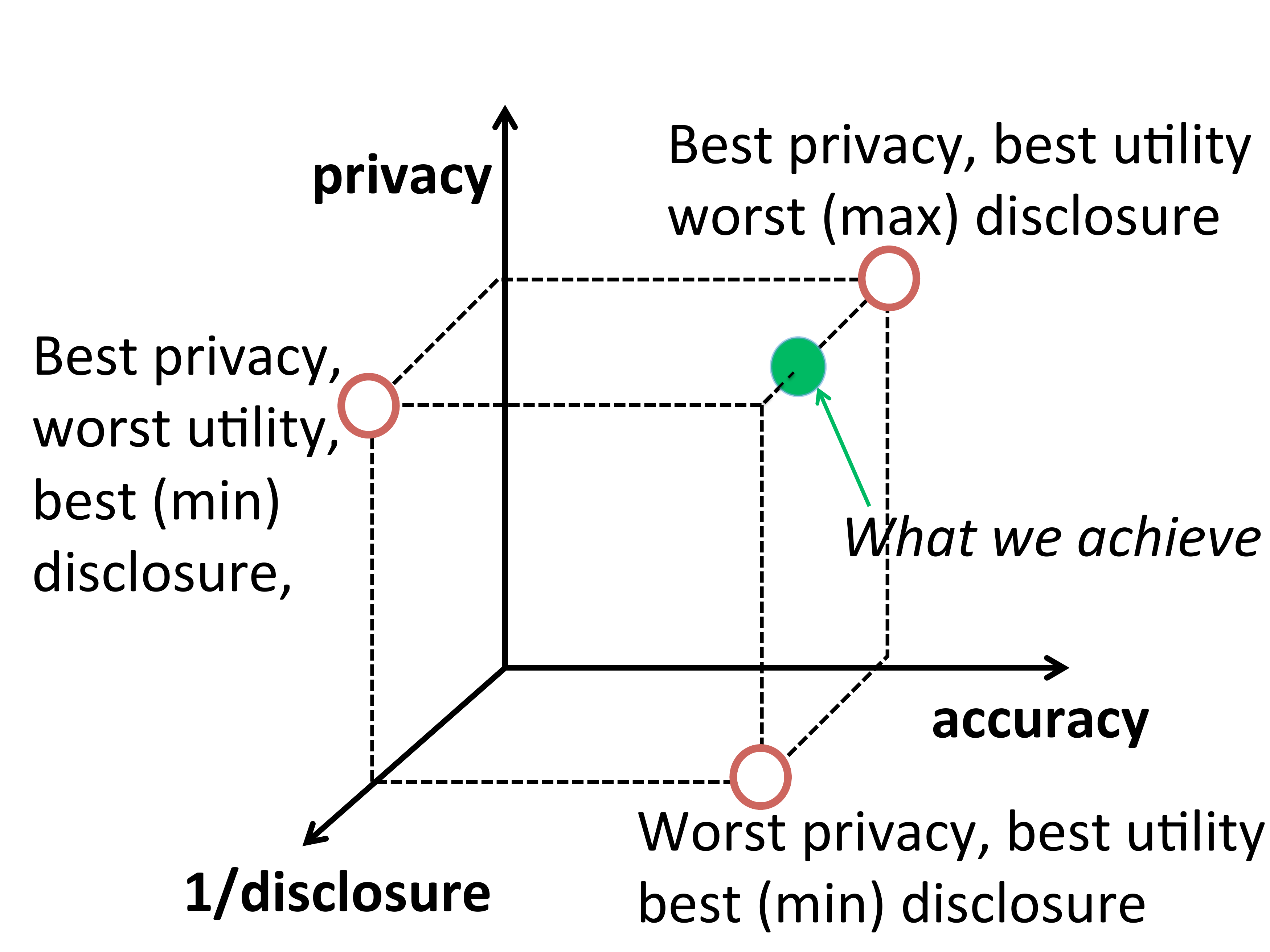}
\caption{\small
The red circles represent the three extreme protocols (Sec.~\ref{sec:overview}) that fail to meet all three properties simultaneously. Our solution (see Sec.~\ref{sec:MidpointProtocol}) lies near the upper front right edge of the cube, as it has perfect privacy and accuracy. We will prove (Thm.~\ref{privacytheorem} in Sec.~5) that the region between our solution and the optimal corner (`zero' disclosure, perfect privacy, maximal accuracy)  is unattainable. }
\label{fig:3wayTradeoff}
\end{figure}
}

\section{Modeling  Privacy Tradeoffs}\label{sec:model}

Section~\ref{sec:newuser} illustrates that serving a privacy-conscious user  is not straightforward: there is a tension between the user's privacy and the utility she receives. Accurate profiling allows correct rating prediction and enables  relevant recommendations, at the cost of the inadvertent revelation of the user's private feature. It is thus natural to ask whether the user can benefit from accurate prediction \emph{while preventing the inference of this feature}.
We will provide both rigorous and empirical evidence that -- perhaps surprisingly -- this is possible. Specific features can be obfuscated without harming personalization. One of our main contributions  is to identify that, beyond this privacy-utility tradeoff, there is in fact a third aspect to this problem: namely, how much information  the analyst discloses to the user. In what follows, we present  a framework that addresses these issues. 
\subsection{Problem Formulation} \label{sec:overview}

Motivated by Section~\ref{sec:newuser}, we consider a setting comprising the two entities we have encountered so far, an \emph{analyst} and a \emph{privacy-conscious user}. The analyst has access to a dataset of ratings collected from users that are not privacy-conscious, and have additionally revealed to the analyst a binary feature. By performing matrix factorization over this dataset, the analyst has extracted  extended item profiles $v_j=(v_{j0},\bv_j)\in \reals^{d+1}$, $j\in [M]$, for a set of $M$ items.

The analyst solicits the ratings of the privacy-conscious user for a subset of items $\myset{S}\in[M]$. We again assume that the user is parametrized by an extended profile $x=(x_0,\bx)\in \{-1,+1\}\times \reals^d$, and that her ratings follow \eqref{withbias}.
The analyst's goal is to profile the user and identify items that the user might rate highly in $[M]\setminus\myset{S}$. The user is willing to aid the analyst in correctly profiling her; however, she is privacy-conscious w.r.t.~her private feature $x_0$, and wishes to \emph{prevent its inference}.  We thus wish to design a \emph{protocol}  for exchanging information between the analyst and the user that has three salient properties; we state these here informally, postponing precise definitions until Section~\ref{properties}:

\begin{packedenumerate}
\item[(a)] At the conclusion of the protocol, the analyst  estimates $\bx$, the non-private component of $x$, \emph{as accurately as possible}.
\item[(b)] The analyst \emph{learns nothing} about the  private feature $x_0$.
\item[(c)] The user learns \emph{as little as possible} about the extended profile $v_j$ of each item $j$.
\end{packedenumerate}

To highlight the interplay between these three properties, we discuss here three ``non-solutions'', i.e., three protocols that fail to satisfy all three properties. First,  observe that the ``empty'' protocol (no information exchange) clearly satisfies (b) and (c), but not (a): the analyst does not learn $\bx$. Second, the protocol in which the user discloses her ratings to the analyst ``in the clear'', as in Section~\ref{sec:newuser}, satisfies (a) and (c) but not (b): it allows the analyst to estimate \emph{both} $\bx$ \emph{and} $x_0$ through, e.g., the LSE \eqref{jointmle}.

Finally, consider the following protocol. The analyst discloses all item profiles $v_j$, $j\in \myset{S}$,  to the user. Subsequently, the user estimates $\bx$ locally, by solving the linear regression \eqref{jointmle} over her ratings in $\myset{S}$, with her private feature $x_0$ fixed. The user concludes the protocol by sending the obtained estimate of $\bx$ to the analyst. Observe that this protocol satisfies (a) and (b), but not (c). In particular, the user learns the extended profiles of all items in their entirety.

These protocols illustrate that it is simple to satisfy any two of the above properties, but not all three. {Each of the three ``non-solutions'' above are in fact extrema among protocols constrained by (a)-(c): each satisfies two properties in the best way possible, while completely failing on the third. In the conceptual schematic of Figure~\ref{fig:3wayTradeoff} we illustrate where these three extreme protocols lie.
} 

There is  a clear motivation, from a practical perspective, to seek protocols satisfying all three properties.
Property (a) ensures that, at the conclusion of the protocol, the analyst learns the non-private component of the user's profile, and can use it to suggest new items--benefiting thusly the user, and motivating the existence of this service.   Property (b) ensures that a privacy-conscious user receives this benefit \emph{without revealing her private feature}, thereby incentivizing her participation. Finally, property (c) limits the extent at which the item profiles $\{v_j\}_{j\in \myset{S}}$ are made publicly available. Indeed, the item profiles and the dataset from which they were constructed are proprietary information:  disclosing them to any privacy-conscious user, as described by the last non-solution, would allow any user to offer the same service. More generally, it is to the analyst's interest to enable properties (a) and (b), thereby attracting privacy-conscious users, while limiting the disclosure of any proprietary information and its exposure to competition.

It is natural to ask what is the precise  statement of ``as accurately as possible'', ``learns nothing'', and ``as little as possible'' in the above description of (a)-(c).  We provide such formal definitions below.

\subsection{A Learning Protocol}\label{userfeedback}\label{learningprotocol}

\begin{figure}[!t]
\hspace*{\stretch{1}}\includegraphics[width=0.6\columnwidth]{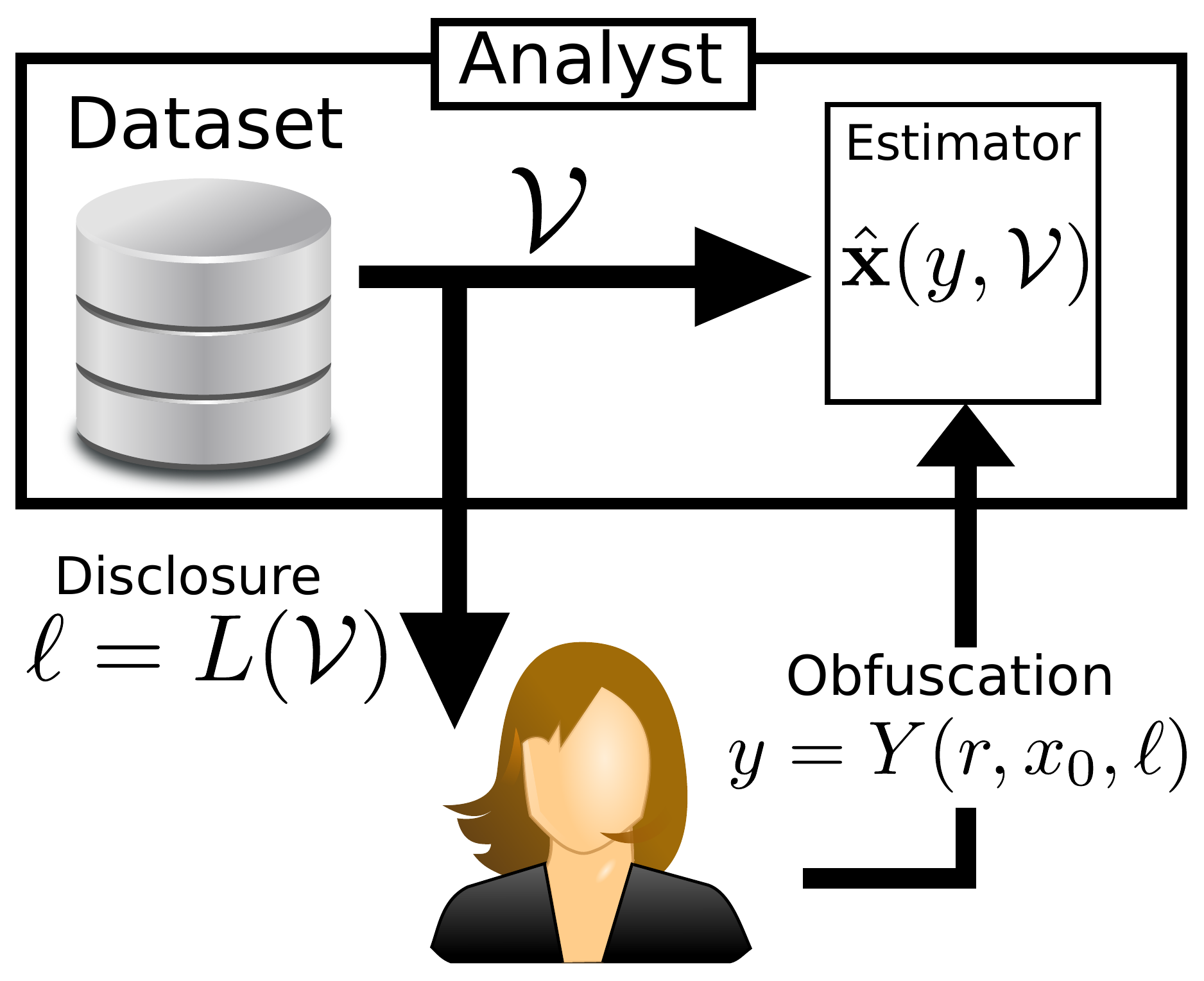}\hspace*{\stretch{1}}
\caption{\small A learning protocol $\cR = (\leak,\obfs,\bhx)$ between an analyst and a privacy-conscious user. The analyst has access to a dataset, from which it extracts the extended profiles $\cV$ through MF. It discloses to the user the information $\ell = \leak(\cV)$. Using this information, her vector of ratings $r$, and her private feature $x_0$, the user computes the obfuscated output $y = \obfs(r,x_0,\ell)$ and reveals it to the analyst. The latter uses this obfuscated feedback as well as the profiles $\cV$ to estimate $\bx$, using the estimator $\bhx(y,\cV)$.}\label{fig:setting}
\end{figure}

To formalize the notions introduced in properties (a)-(c) of Section~\ref{sec:overview}, we describe in this section the interactions between the privacy-conscious user and the analyst in a more precise fashion. Recall that the user is parametrized by an extended profile $x=(x_0,\bx)\in \{-1,+1\}\times \reals^d$, and that her ratings follow \eqref{withbias}; namely,
\begin{align}\label{eq:LinearModel}
r_{j} = \langle \bx,\bv_j\rangle  + x_{0}v_{j0} +\varepsilon_{j} = \langle x,v_j \rangle +\varepsilon_{j}, \quad j\in [M]
\end{align}
where $v_j\in\reals^{d+1}$, is the extended profile of item $j$, extracted through MF, and $\varepsilon_j$ are i.i.d.~zero mean random variables of  variance $\sigma^2<\infty$.  We note that, unless explicitly stated, we \emph{do not} assume that the noise variables $\varepsilon_j$ are Gaussian; our results will hold with greater generality.

 We assume  that the set of items $\myset{S}\subseteq [M]$, for which ratings are solicited, is an arbitrary set chosen by the analyst\footnote{We discuss how the analyst can select the items  in $\myset{S}$  in Section~\ref{sec:itemselection}.}.
 We restrict our attention to items with extended profiles $v_j$ such that $\bv_j\neq \mathbf{0}$. Indeed, given the analyst's purpose of estimating $\bf{x}$,  the rating of an item for which $\bv=\mathbf{0}$ is clearly uninformative in light of \eqref{eq:LinearModel}. We denote by
$$ \reals^{d+1}_{-\mathbf{0}} \equiv \{(v_0,\bv)\in \reals^{d+1}: \bv\neq \mathbf{0}\}$$ the set of all such vectors,
and by
 $\cV\equiv\{v_j,\; j\in \mathcal{S}\}\subseteq\reals^{d+1}_{-\mathbf{0}}$ the extended profiles of items in $\myset{S}$. Recall that the user does not  a priori know  $\cV$.
In addition, the user knows her private variable $x_0$  and either knows or can easily generate her rating $r_j$ to each item $j\in \myset{S}$. Nevertheless, the user  \emph{does not a-priori know} the remaining profile $\bx\in \reals^d$. This is consistent with MF, as the
 ``features'' corresponding to each coordinate of $\bv_j$ are  ``latent''.

The assumption that the user either knows or can easily generate her ratings in $\myset{S}$ is natural  when the user can immediately form an opinion (this is the case for items such as blog posts, ads, news articles, tweets, pictures, short videos, etc.); or, when the ``rating'' is automatically generated from user engagement (e.g., it is the time a user spends at a website, or the reading collected by a skin sensor); or, when the user is obligated to generate a response (because, e.g., she is paid to do so). We discuss the case where the user can readily produce ratings for only a subset of $\myset{S}$ in Section~\ref{sec:partial}.

Using the above notation, we define a \emph{privacy-preserving learning protocol} as a protocol consisting of the following three components, as illustrated in Figure~\ref{fig:setting}:
\begin{packeddescription}
\item[Disclosure.] The disclosure determines the amount of information that the analyst discloses to the user regarding the profiles in $\cV$. Formally, it is a mapping $$\leak:{\reals^{d+1}_{-\bf{0}}}\to \cL,$$ with $\cL$ a generic set\footnote{For technical reasons $\cL$, and $\cY$ below, are in fact measurable spaces, which include of course $\reals^k$, for some $k\in\naturals$.}.
This is implemented as a program and executed  by the analyst, who discloses to the user the information $\ell_j \equiv L(v_j)\in\cL$ for each item $j\in \myset{S}$. We denote by $L(\cV)$ the vector $\ell\in \cL^{|\myset{S}|}$ with coordinates $\ell_j$, $i\in \myset{S}$.
We note that, in practice,  $\leak(\cV)$ can be made public, as it is needed by all potential privacy-conscious users that wish to interact with the analyst.
\item[Obfuscation Scheme.] The obfuscation scheme describes how user ratings are modified (obfuscated) before being revealed to the analyst.  Formally, this is a mapping $$\obfs:\reals^{|\myset{S}|}\times\{-1,+1\}\times\cL^{|\myset{S}|}\to \cY,$$ for $\cY$ again a generic set.
The mapping is implemented as a program and executed by the user. In particular, the user enters her ratings
$r=(r_1,\ldots,r_{|\myset{S}|})\in \reals^{|\myset{S}|}$,  her private variable $x_0$ \emph{as well as} the  disclosure $\ell=\leak(\cV)\in \cL^{|\myset{S}|}$.
The program combines these quantities computing the obfuscated value
$y= \obfs(r,x_0,\ell)\in \cY$, which the user subsequently reveals to the analyst.
\item[Estimator.] Finally, using the obfuscated output by the user, and the item profiles, the analyst constructs an estimator of the user's profile $\bx\in \reals^d$. Formally: $$\bhx:\cY \times  (\reals^{(d+1)}_{-\bf{0}})^{|\myset{S}|} \to\reals^d.$$ That is, given the item feature vectors
$\cV \subset \reals^{d+1}_{-\bf{0}}$ and the corresponding obfuscated user feedback $y\in \cY$,
it yields an estimate $\bhx(y,\cV)$ of the user's non-private feature vector $\bx$.
The estimator is a program executed by the analyst.
\end{packeddescription}
We  refer to a triplet $\cR = (\leak,\obfs,\bhx)$ as a \emph{learning protocol}.
Note that the functional forms of all three of these components are known to both parties: \emph{e.g.}, the analyst knows the obfuscation scheme $\obfs$.  Both parties are \emph{honest but curious}: they follow the protocol, but if at any step they can extract more information than what is intentionally revealed, they do so.
All three mappings in protocol $\cR$  can be randomized. In the following, we denote by $\Pm_{x,\cV}$, $\Em_{x,\cV}$ the probability and expectation with respect to
the noise in \eqref{eq:LinearModel} as well as  protocol randomization, given $x$, $\cV$.

\subsection{Privacy, Accuracy, and Disclosure  Extent }\label{properties}
Having formally specified a learning protocol $\cR = (\leak,\obfs,\bhx)$,  we now define the three quality metrics we wish to attain, corresponding to the properties (a)-(c) of Section~\ref{sec:overview}.

\medskip\noindent\textbf{Privacy.}
We begin with our formalization of privacy:
\begin{definition}\label{def:privacy}
  We say that $\cR=(\leak,\obfs,\bhx)$   is \emph{privacy preserving} if the obfuscated output $Y$ is independent of $x_0$. Formally, for all $\bx\in\reals^d$,  $\cV\subseteq\reals^{(d+1)}_{-\bf{0}}$, and  $A\subseteq \cY$,\begin{align}\label{eq:privacy}
{\Pm_{\!(-1,\bx),\cV}\big(\obfs(r,\!-1,\!\ell)\!\in\! A\big)}\!=\!{\Pm_{\!(+1,\bx),\cV}\big(\obfs(r,\!+1,\!\ell)\!\in\! A\big)},
\end{align}
where $\ell=\leak(\cV)$ is the information disclosed from $\cV$, and $r\in \reals^{|\myset{S}|}$ is the vector of  user ratings.
\end{definition}
Intuitively, a learning protocol is privacy-preserving if its obfuscation scheme reveals nothing about the user's private variable: the distribution of the output $Y$ does not depend statistically on $x_0$. Put differently, two users that have the same $\bx$, but different $x_0$, output obfuscated values that are \emph{computationally indistinguishable} \cite{goldwasser}.

Computational indistinguishability is a strong privacy property, as it implies a user's private variable is protected against \emph{any} inference algorithm (and not just, e.g., the LSE  \eqref{jointmle}): in particular, no inference algorithm can estimate $x_0$ with probability better than 50\% with access to $y$ alone.

\medskip\noindent\textbf{Accuracy.} Our second definition determines a partial ordering among learning protocols w.r.t.~their accuracy, as captured by the $\ell_2$ loss of the estimation: \begin{definition}\label{def:accuracy}
We say that a learning protocol $\cR = (\leak,\obfs,\bhx)$ is more accurate than
 $\cR' = (\leak',\obfs',\bhx')$ if, for all  $\cV\subseteq \reals^{d+1}_{-\bf{0}}$, \begin{align*}
\!\!\sup_{\substack{x_0\in\{0, 1\}\\\bx\in\reals^d}}\!\!\!\!\Em_{(x_0,\bx),\!\cV}\{\|\bhx(y,\!\cV)\!-\!\bx\|^2_2\}
\le \!\!\!\sup_{\substack{x_0\in\{0, 1\}\\\bx\in\reals^d}}\!\!\!\!\Em_{(x_0,\bx),\!\cV}\{\|\bhx'(y'\!\!,\!\cV)\!-\!\bx\|^2_2\}\, ,
\end{align*}
where $y=\obfs(r,x_0,\leak(\cV))$, $y'=\obfs'(r,x_0,\leak'(\cV))$.
Further, we say that it is strictly more accurate if the above inequality holds strictly for some $\cV\subseteq \reals^{d+1}_{-\bf{0}}$.
\end{definition}
Note that the accuracy of $\cR$ is determined by  the  $\ell_2$ loss of the estimate $\bhx$ computed in  a \emph{worst-case} scenario, among all possible extended user profiles $x=(x_0,\bx)$.

This metric is natural. As we discuss in Section~\ref{sec:itemselection}, it relates to the so-called A-optimality criterion~\cite{boyd}. It also has an additional compelling motivation. Recall that $\bhx$ is used to estimate the rating for a new item through the inner product \eqref{prediction}.  An estimator $\bhx$ minimizing the expected $\ell_2$ loss also minimizes the mean square prediction error over a new item. This further motivates this accuracy metric, given that the analyst's goal is correct rating prediction.
 
To see this, assume that the extended user profile is estimated as $\hx=(\hx_0,\bhx)$ for some $\bx_0$
(for brevity we omit
the dependence on $y,\cV$).  Recall that the analyst uses this profile to predict ratings for $v\notin \cV$ using  $\hr= \<v,\hx\>$. The quality
of such a prediction is often evaluated in terms of the mean square
error (MSE): \begin{align*}
\MSE = \Em\{(r\!-\!\hr)^2\}\! \stackrel{\eqref{eq:LinearModel}}{=}\! \sigma^2\!+\!\Em\{\<v,(x\!-\!\hx)\>^2\}\, .\end{align*}
Assuming a random item vector $v$ with diagonal covariance
$\Em(v_0^2)=c_0$, $\Em(v_0\bv) = 0$, $\Em(\bv\bv^{\sT}) = c\id$, we get
$$\MSE=\sigma^2+c_0\Em\{(x_0-\hx_0)^2\}+c\Em\big\{\|x-\hx\|_2^2\big\}\, .$$
Observe that the first term is independent of the estimation.  Under a privacy-preserving protocol, the value for $\hx_0$ that minimizes the second term is 0.5, also independent of the estimation. The last term is precisely the $\ell_2$ loss.   Hence, minimizing the mean square error of the analyst's prediction  is equivalent to minimizing the $\ell_2$ loss of the estimator $\hx$. This directly motivates our accuracy definition.

\medskip\noindent\textbf{Disclosure Extent.} Finally, we define a partial ordering among learning protocols w.r.t.~the amount of information revealed by their disclosures.
\begin{definition}\label{def:disclosure}
We say that $\cR = (L,R,\bhx)$ discloses {at least as much}
 information  as $\cR' = (\leak',\obfs',\bhx')$ if
there exists a measurable mapping
$\varphi:\cL\to\cL'$ such that $$L' = \varphi\circ L$$
\emph{i.e.}, $\leak'(v)=\varphi(\leak(v))$ for each $v\in\reals^{d+1}_{-\bf{0}}$.
We say that  $\cR$ and $\cR'$ disclose the
{same amount of} information if  $\leak = \varphi\circ \leak'$ and  $\leak' = \varphi'\circ \leak$
for some $\varphi$, $\varphi'$. Finally, we say that  $\cR$ discloses {strictly
more} information than $\cR'$ if $\leak' = \varphi\circ \leak$
for some $\varphi$ but there exists no $\varphi'$ such that $\leak = \varphi'\circ \leak'$.
\end{definition}
The above definition is again natural. Intuitively, a disclosure $L$ carries at least as much information as $L'$ if  $L'$ can be retrieved from $L$: the existence of the mapping $\varphi$  implies that the user can recover $L'$ from $L$ by applying $\varphi$ to the disclosure $L(\cV)$. Put differently, having a ``black box'' that computes $L$, the user can compute $L'$ by feeding the output of this box to $\varphi$. If this is the case, then $L$ is clearly at least as informative as $L'$.

\makeatletter{}\section{An Optimal Protocol}\label{sec:MidpointProtocol}

\begin{algorithm}[t]
\begin{small}
  \caption{\textsc{Midpoint Protocol}}\label{alg:MidpointProtocol}
    \begin{algorithmic}
  \STATE \textbf{Analyst's Parameters}
  \STATE $\myset{S}\subseteq [M]$,           $\cV=\{(v_{j0},\bv_j),\; j\in \mathcal{S}\}\subseteq\reals^{d+1}_{-\mathbf{0}}$
  \STATE
  \STATE \textbf{User's Parameters}
  \STATE $x_0 \in\{-1,+1\}$,          $r=(r_1,\ldots,r_{|\myset{S}|})\in \reals^{|\myset{S}|}$
  \STATE
   \STATE  DISCLOSURE: $\ell=\leak(\cV)$
       \STATE $\ell_j = v_{j0}$, for all $j\in \myset{S}$
          \STATE
    \STATE OBFUSCATION SCHEME: $y= \obfs(r,x_0,\ell)$
       \STATE $y_j= r_j-x_0\cdot \ell_j$, for all $j\in \myset{S}$
          \STATE
    \STATE ESTIMATOR: $\bhx=\bhx(y,\cV)$
       \STATE Apply the minimax estimator $\bhx^*$ given by \eqref{eq:minimax}.  \end{algorithmic}
\end{small}
\end{algorithm}

In this section we prove that a simple learning protocol outlined in Algorithm~\ref{alg:MidpointProtocol}, which we refer to as the \emph{midpoint protocol} (MP), has remarkable optimality properties. The three components $(\leak,\obfs,\bhx)$ of MP are as follows:
\begin{packedenumerate}
\item The analyst discloses the entry $v_0$ corresponding to the private
user feature $x_0$, \emph{i.e.},  $\leak\big((v_0,\bv)\big) \equiv v_0$ for all $(v_0,\bv)\in \reals^{d+1}_{-\bf{0}}$, and  $\cL\equiv\reals$.
\item The user shifts each rating $r_j$ by the contribution of her private feature. More specifically, the user reveals to the analyst the quantities:
 $$y_j= r_j-x_0\cdot \ell_j=r_j-x_0\cdot v_{j0}, \quad j \in \myset{S}.$$
The user's obfuscated feedback is thus $\obfs(r,x_0,\ell) \equiv y$, where vector $y$'s coordinates are the above quantities, i.e., $y=(y_1,\ldots,y_{|S|})$, and  $\myset{Y}\equiv \reals^{|\myset{S}|}$. Note that, by  \eqref{eq:LinearModel}, for every $j\in \myset{S}$ the obfuscated feedback satisfies $y_j=\<\bv_j,\bx\>+\varepsilon_j$, with $\varepsilon_j$ the i.i.d.~zero-mean noise variables.
\item Finally, the analyst applies a \emph{minimax} estimator  on the obfuscated feedback.
Let $\cX$ be the set of all measurable mappings $\bhx$ estimating $\bx$ given $y$ and $\cV$ (i.e., of the form $\bhx: 
\reals^{|\myset{S}|} \times  (\reals^{(d+1)}_{-\bf{0}})^{|\myset{S}|} \to\reals^d$). Estimator
 $\bhx^*\in \cX$ is  \emph{minimax}  if it minimizes the worst-case $\ell_2$ loss, i.e.:
\begin{align}\label{eq:minimax}
\begin{split}
\!\!\sup_{{x_0\in\{0, 1\},\bx\in\reals^d}}\!\!\!\!&\Em_{(x_0,\bx),\!\cV}\{\|\bhx^*(y,\!\cV)\!-\!\bx\|^2_2\}
=\\&\!\!\! \inf_{\bhx\in \cX}\sup_{{x_0\in\{0, 1\},\bx\in\cX}}\!\!\!\!\Em_{(x_0,\bx),\!\cV}\{\|\bhx(y,\cV)\!-\!\bx\|^2_2\}.
\end{split}
\end{align}
\end{packedenumerate}

The following theorem summarizes the midpoint protocol's remarkable properties:
\begin{theorem}\label{privacytheorem}
Under the linear model \eqref{eq:LinearModel}:
\begin{packedenumerate}
\item MP is privacy preserving.
\item  No privacy preserving protocol  is strictly more accurate than MP.
\item  Any privacy preserving protocol that does not disclose at least as much information as MP is strictly less accurate.
\end{packedenumerate}
\end{theorem}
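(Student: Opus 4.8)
The plan is to verify the three properties in order. Property~1 is immediate: under MP the user outputs $y_j = r_j - x_0 v_{j0}$, and substituting the linear model \eqref{eq:LinearModel} gives $y_j = \langle \bv_j,\bx\rangle + \varepsilon_j$, which no longer contains $x_0$. Since the law of $(\varepsilon_j)_{j\in\myset{S}}$ does not depend on $x_0$, the law of the whole output $Y$ is identical for $x_0=+1$ and $x_0=-1$, which is exactly condition \eqref{eq:privacy}. So the only content here is the algebraic cancellation.

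For Property~2 I would fix $\cV$ and show that MP's worst-case risk lower-bounds that of an \emph{arbitrary} protocol $\cR'=(\leak',\obfs',\bhx')$ (getting the stronger statement, which implies the claim). The MP observation is $y$ whose law $\nu_\bx$ depends only on $\bx$. The key step is a garbling: define the randomized map $g(y):=\obfs'\big((y_j+v_{j0})_j,\,+1,\,\leak'(\cV)\big)$, which the analyst can form since it knows $\cV$ (hence the $v_{j0}$) and $\leak'$. For a user with $x_0=+1$ we have $r_j=y_j+v_{j0}$, so $\obfs'(r,+1,\leak'(\cV))=g(y)$; thus the law of $\cR'$'s output at $(+1,\bx)$ is the pushforward $g_*\nu_\bx$, i.e. $\bx\to y\to Y'$ is a Markov chain. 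By the data-processing principle for minimax estimation — any estimator of $\bx$ from $Y'$ composes with $g$ into an estimator from $y$ of identical risk at every $\bx$ — the minimax $\ell_2$ risk from $Y'$ is at least that from $y$, and the latter is exactly MP's risk by \eqref{eq:minimax}. Since $\cR'$'s worst-case risk is at least its value at $x_0=+1$, which in turn is at least the minimax risk from $Y'$, we conclude MP's risk $\le\cR'$'s risk for every $\cV$.

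For Property~3, suppose $\cR'$ is privacy-preserving but does not disclose at least as much as MP, so $v_0$ is \emph{not} a function of $\leak'$; then there exist $v=(v_0,\bv)$ and $v'=(v_0',\bv')$ with $\leak'(v)=\leak'(v')=\ell$ but $v_0\neq v_0'$. Applying Definition~\ref{def:privacy} to the singletons $\cV=\{v\}$ and $\cV=\{v'\}$ forces the \emph{same} obfuscation map $F(\cdot,\cdot):=\obfs'(\cdot,\cdot,\ell)$ to be private for both items. Writing $\Phi_\pm(a)$ for the law of $F(a+\varepsilon,\pm1)$ and letting the scalar signal $\langle\bv,\bx\rangle$ (resp. $\langle\bv',\bx\rangle$) range over all of $\reals$ (possible as $\bv,\bv'\neq\mathbf{0}$), privacy yields $\Phi_+(a)=\Phi_-(a-2v_0)$ and $\Phi_+(a)=\Phi_-(a-2v_0')$ for every $a$. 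Eliminating $\Phi_+$ gives $\Phi_-(a-2v_0)=\Phi_-(a-2v_0')$, i.e. $\Phi_-$ (hence $\Phi_+$) is periodic with period $c:=2(v_0-v_0')\neq0$. The obfuscated output is therefore blind to shifts of the signal by $c$: profiles $\bx$ and $\bx+k\delta$ with $\langle\bv,\delta\rangle=c$ produce identical output laws and are statistically indistinguishable for all $k\in\Z$. Choosing a $\cV$ in which this one ambiguous direction is left unmeasured while the remaining directions are pinned down, a two-point argument over the family $\{\bx+k\delta\}_{k\in\Z}$ makes $\cR'$'s worst-case $\ell_2$ risk infinite, whereas MP's worst-case risk is finite whenever the $\bv_j$ span $\reals^d$. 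Combined with Property~2 (MP is never beaten), this single witnessing $\cV$ shows $\cR'$ is strictly less accurate.

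I expect Property~3 to be the main obstacle, on two fronts. First, the derivation of periodicity requires carefully pushing the privacy identities through the noise convolution defining $\Phi_\pm$ (and handling the case where the set of biases sharing a disclosure value is large, in which the admissible periods become dense and $\Phi_\pm$ is in fact constant). Second, and more delicate, is promoting the \emph{single-coordinate} periodicity obtained from singletons into a genuine \emph{unbounded} ambiguity for a multi-item $\cV$ whose obfuscation is applied jointly — that is, constructing a $\cV$ that isolates one periodic direction $\delta$ while simultaneously keeping MP's risk finite (items spanning $\reals^d$). Making this construction work in arbitrary dimension $d$, using only the one witness pair guaranteed by non-determination, is the crux of the argument.
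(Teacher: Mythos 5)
Your proofs of Parts 1 and 2 are correct. For Part 2 you take a mildly different route from the paper: the paper argues by contradiction and invokes the privacy of $\cR'$ (via $\obfs'(y+\ell,+1,\ell')\ed\obfs'(y-\ell,-1,\ell')$) so that the composed estimator reproduces $\cR'$'s risk at \emph{both} values of $x_0$, whereas you lower-bound $\cR'$'s worst-case risk by its $x_0=+1$ slice and apply data processing there. Your version needs no privacy hypothesis on $\cR'$ at all, which cleanly delivers the paper's side remark that even protocols disclosing strictly more information cannot beat MP. (Both versions share the minor technicality that the composed estimator $\bhx'\circ g$ is randomized while $\cX$ consists of deterministic maps; for squared loss this is repaired by taking conditional expectations.)

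Part 3 is where the genuine gap lies. Your periodicity identity is derived for the obfuscation map restricted to singleton item sets $\cV=\{v\}$ and $\cV=\{v'\}$. But $\obfs'$ on a $|\myset{S}|$-item set is a \emph{different} map $\reals^{|\myset{S}|}\times\{-1,+1\}\times\cL^{|\myset{S}|}\to\cY$, and nothing in the protocol definition ties its behavior to the singleton case; moreover, for a singleton MP's own worst-case risk is already infinite when $d>1$, so no strict separation can be extracted at that level. The step you defer as ``the crux'' is precisely where the paper's proof does all its work: it embeds the witness pair directly into $d$-item sets $\cV=\{v\}\cup\bigcup_{l\neq k}\{e_l\}$ and $\cV'=\{v'\}\cup\bigcup_{l\neq k}\{e_l\}$ which (i) have invertible $V=[\bv_j]$, so MP's loss \eqref{lsqloss} stays finite, and (ii) share the same disclosure $\ell^*=\leak'(\cV)=\leak'(\cV')$ because $\leak'$ acts itemwise and $\leak'(v)=\leak'(v')$. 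Privacy applied to each configuration then yields two shift identities for the \emph{same} $d$-dimensional obfuscation map (Lemma~\ref{lemma:LeakPlusMinus}), whose combination gives periodicity along ${\bf e}_1$; the two-point argument with $x,x'$ differing only in coordinate $k$ then works because the padding items $e_l$, $l\neq k$, are insensitive to that coordinate. A separate case is still required when the supports of $\bv$ and $\bv'$ are disjoint (no common $k$ exists, and one must include both $v$ and $v'$ in $\cV$, obtaining periodicity along ${\bf e}_1-{\bf e}_2$). So your skeleton is the right one, but the construction you flag as the crux is the actual content of the paper's proof of Part 3, and as written your argument does not yet establish it.
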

We prove the theorem below.
Its second and third statement establish formally the optimality of the midpoint protocol.
Intuitively, the second statement implies that the midpoint protocol has \emph{maximal accuracy}. No privacy preserving protocol  achieves  better accuracy: surprisingly, this is true even among schemes that \emph{disclose strictly more information than the midpoint protocol}. As such, the second statement of the theorem imples there is no reason to disclose more than $v_{j0}$ for each item $j\in \myset{S}$.

The third statement implies that the midpoint protocol engages in a \emph{minimal disclosure}: to achieve maximal accuracy, a learning protocol \emph{must disclose at least} $v_{j0}$, $j\in \myset{S}$. In fact, our proof shows that the gap between the accuracy of MP and a protocol not disclosing biases is infinite, for certain $\cV$. We note that the disclosure in MP is intuitively appealing: an analyst need only disclose the gap between average ratings across the two types (e.g., males and females, conservatives and liberals, etc.) to enable protection of $x_0$.

In general, the minimax estimator $\bhx^*$  depends on the distribution followed by the noise variables in \eqref{eq:LinearModel}. For arbitrary distributions,  a minimax estimator that can be computed in a closed form (rather than as the limit of a sequence of estimators) may not be known. General conditions for the existence of such estimators can be found, e.g., in Strasser~\cite{strasser1982local}. 
In the case of Gaussian noise, the minimax estimator coincides with the least squares estimator (see, e.g., Lehman and Casella~\cite[Thm.~1.15, Chap.~5]{lehmann}), i.e., 
\begin{align}\label{eq:lsq}
\textstyle\!\!\!\!\!\bhx^*(y,\cV)\!=\! \argmin_{\bx\in\reals^d}\! \Big\{\!\sum_{j=1}^{|\myset{S}|}\!\!\big(y_j\!-\!\<\bv_j\!,\!\bx\>\big)^2\Big\}\, .\!
\end{align}
The minimization in \eqref{eq:lsq} is a linear regression, and   $\bhx^*$ has the following closed form:
\begin{align}\label{eq:regress}
\bhx^* (y,\cV) = \textstyle\big(\sum_{j\in \myset{S}} \bv_j\bv_j^T\big)^{-1} \cdot \big(\sum_{j\in \myset{S}} y_j\bv_j\big).
\end{align}
The accuracy of this estimator can also be computed in a closed form. Using, \eqref{eq:LinearModel}, \eqref{eq:regress}, and the definition of $y$, it can easily be shown that, for all $\bx\in \reals^d$, \begin{align}\label{lsqloss}
\Em_{(x_0,\bx),\!\cV}\{\|\bhx^*(y,\cV)\!-\!\bx\|^2_2\} \! =\! \textstyle\sigma^2\mathrm{tr}\big[\big(\sum_{j\in \myset{S}} \bv_j\bv_j^T\big)^{-1}\big],
\end{align}
where $\sigma^2$ the noise variance in \eqref{eq:LinearModel} and $\mathrm{tr}(\cdot)$ the trace.

\makeatletter{}\subsection{Proof of Theorem~{\protect{\ref{privacytheorem}}}}

\noindent\textbf{Privacy.} To see that Thm.~\ref{privacytheorem}.1 holds, observe that the user releases $y_j =r_j-v_{0j}x_0 \stackrel{\eqref{eq:LinearModel}}{=} \<\bv_j,\bx\>+\varepsilon_j,$ for each $j\in \myset{S}$. The distribution of $y_j$ thus does not depend on $x_0$, so the midpoint protocol is clearly privacy preserving.

\medskip\noindent\textbf{Maximal Accuracy.} We prove  Theorem~\ref{privacytheorem}.2 by contradiction; in particular, we show that a protocol that is strictly more accurate can be used to construct an estimator that has lower worst-case $\ell_2$ loss than the minimax estimator. 

Suppose that there exists a privacy preserving protocol $\cR'=(\leak',\obfs',\bhx')$ that is strictly more accurate than the midpoint protocol $\cR=(\leak,\obfs,\bhx)$. Let $\ell=\leak(\cV),\ell'=\leak'(\cV)$ be the disclosures under the two protocols, and $y = \obfs(r,x_0,\ell)$, $y'=\obfs'(r,x_0,\ell')$ the obfuscated values. Recall that  $$\ell_j=v_{j0},~\text{and}~ y_j = r_j-x_0v_{0j} = \<\bv_j,\bx\>+\varepsilon_j, \quad j\in \myset{S}.$$      
We will use $\leak'$, $\obfs'$ and $\bhx'$ to construct an estimator $\bhx''$ that has a lower $\ell_2$ loss than the least squares estimator $\bhx$ over $y$ and $\cV$.  First, apply $\obfs'$ to $y+\ell$, assuming that the private variable is $x_0=+1$, using the disclosed information $\ell'$. That is:
$y'' = \obfs'(y+\ell,+1,\ell').$
 Second,  apply the estimator $\bhx'$ to this newly obfuscated output $y''$, i.e.:
$\bhx'(y'',\cV) $ 
Combining these two the estimator $\bhx''$ is given by
\begin{align*}
\bhx''(y,\cV) = \bhx'\left( \obfs'\left(y+\ell,+1,\leak'(\cV)\right), \cV\right) 
\end{align*} 
Under this construction, the random variables $y''$, $y'$ are identically distributed. This is obvious if $x_0=+1$; indeed, in this case $y''=y'$. On the other hand, since  $\cR'$ is privacy preserving, by \eqref{eq:privacy}:
\begin{align}\obfs'(y+\ell,+1,\ell') \stackrel{{\rm d}}{=} \obfs'(y-\ell,-1,\ell'),\label{eq:same}\end{align}
{i.e.},  the two variables are equal in distribution.

This implies that $\bhx''(y,\cV)$ is identically distributed as $\bhx'(y',\cV)$. On the other hand, $\cR'$ is strictly more accurate than $\cR$; hence, there exists a $\cV$ such that
\begin{align*}
\sup_{x} \Em\{\|\bhx(y,\cV)-\bx\|_2^2 \} &>  \sup_{x} \Em\{\|\bhx'(y',\cV)-\bx\|_2^2 \} \\ &= \sup_{x} \Em\{\|\bhx''(y,\cV) -\bx \|_2^2\} ,
\end{align*}
a contradiction.

\medskip\noindent\textbf{Minimal Disclosure.} Consider a privacy preserving learning protocol $\cR'=(\leak',\obfs',\bhx')$ that does not disclose at least as much information as the midpoint protocol $\cR=(\leak,\obfs,\bhx)$. Consider a setup where $|\myset{S}| = d$, the dimension of the feature profiles. Assume also that $\cV$ is such that the matrix $V = [\bv_j]_{j\in \myset{S}}\in \reals^{d\times d}$ is invertible, and denote by $\ell=\leak(\cV)\in\reals^d$ the vector with coordinates $v_{j0}$, $j\in \myset{S}$. 

For any
$x_0\in\{+1,-1\}$, $s\in \reals^d$, and $\ell'\in(\cL')^d$, let $Z_{x_0}(s,\ell')\in \cY'$ be a random variable  with distribution
given by
$Z_{x_0}(s,\ell') \ed \obfs'(s+\varepsilon,x_0,\ell'),$
where $\varepsilon\in \reals^d$ a vector of \emph{i.i.d.}~coordinates sampled from the same distribution as the noise variables $\varepsilon_j$, $j\in \myset{S}$.
Put differently, $Z_{x_0}(s,\ell')$ is distributed as the output of obfuscation $\obfs'$ when $r-\varepsilon = V\bx+x_0\ell=s\in \reals^d$,  $\leak'(\cV)=\ell'$, and the gender is $x_0$.
The following then holds.
\begin{lemma}\label{lemma:LeakPlusMinus}
If $V\in \reals^{d\times d}$ is invertible then, for all $s\in\reals^d$, $\ell=\leak(\cV)$, and $\ell' = \leak'(\cV)$,
$
Z_+(s, \ell') \ed Z_-(s -2\ell,\ell').$
\end{lemma}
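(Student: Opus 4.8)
The plan is to read off the claim directly from the privacy property of $\cR'$ (Definition~\ref{def:privacy}), combined with a change of variables that is precisely where the invertibility of $V$ is needed. Recall that, by construction, $Z_{x_0}(s,\ell')$ is the law of the obfuscated output $\obfs'(r,x_0,\ell')$ in the situation where the noise-free part of the rating vector equals $s$. So the strategy is to write down the two noise-free signal vectors that arise for the two values of $x_0$ at a common $\bx$, match their obfuscated laws using privacy, and then re-index over $s$.

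First I would fix an arbitrary $\bx\in\reals^d$ and spell out the rating vector under each value of $x_0$. By \eqref{eq:LinearModel}, when the private feature is $x_0=+1$ we have $r=V\bx+\ell+\varepsilon$, whose noise-free part is $V\bx+\ell$; when $x_0=-1$ we have $r=V\bx-\ell+\varepsilon$, with noise-free part $V\bx-\ell$. Here I use that $\ell=\leak(\cV)$ has coordinates $v_{j0}$, which are exactly the per-item bias contributions $x_0v_{j0}$ with the sign of $x_0$ stripped. In the $Z$-notation this reads $\obfs'(r,+1,\ell')\ed Z_+(V\bx+\ell,\ell')$ and $\obfs'(r,-1,\ell')\ed Z_-(V\bx-\ell,\ell')$.

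Next I would invoke privacy for $\cR'$. Since $\cR'$ is privacy preserving, \eqref{eq:privacy} (read with $\leak',\obfs'$ in the roles of $\leak,\obfs$, and $\ell'=\leak'(\cV)$) says that for the same $\bx$ the obfuscated output has the same law whether $x_0=+1$ or $x_0=-1$. Combining this with the previous display gives $Z_+(V\bx+\ell,\ell')\ed Z_-(V\bx-\ell,\ell')$ for every $\bx\in\reals^d$.

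Finally I would carry out the change of variables. Given any target $s\in\reals^d$, invertibility of $V$ lets me solve $V\bx+\ell=s$ by setting $\bx=V^{-1}(s-\ell)$, and for this choice $V\bx-\ell=s-2\ell$. Substituting into the equality of laws yields $Z_+(s,\ell')\ed Z_-(s-2\ell,\ell')$, which is exactly the assertion of Lemma~\ref{lemma:LeakPlusMinus}. I expect no analytic difficulty here; the one conceptual step—and the only place invertibility is essential—is this last re-indexing, which guarantees that the affine map $\bx\mapsto V\bx+\ell$ is onto $\reals^d$ so that the privacy identity, which a priori only links the two signal vectors $V\bx\pm\ell$ for a common $\bx$, can be promoted to hold for every $s$. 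The remaining care is purely bookkeeping of the sign of the bias term $x_0\ell$ and of the resulting $-2\ell$ shift in the argument of $Z_-$.
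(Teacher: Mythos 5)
Your proposal is correct and follows essentially the same route as the paper: both derive $\obfs'(V\bx+\ell+\varepsilon,+1,\ell')\ed\obfs'(V\bx-\ell+\varepsilon,-1,\ell')$ from the privacy property of $\cR'$ and then substitute $\bx=V^{-1}(s-\ell)$, using invertibility of $V$ exactly to make the re-indexing over all $s\in\reals^d$ possible. No gaps.
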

\begin{proof}
 By Eq.~(\ref{eq:same}), for all $\bx\in\reals^d$, $$\obfs'(V\bx+\ell+\varepsilon,+1,\ell')
\ed \obfs'(V\bx-\ell+\varepsilon,-1,\ell')\, .$$
The claim follows by taking $\bx = V^{-1}(s-\ell)$.\hspace*{\stretch{1}}
\end{proof}

As $\cR'$ does not disclose as much information as the midpoint protocol, by definition, there is no map $\varphi$ such that $\leak(v)=\varphi(\leak'(v))$ for all $v=(v_0,\bv)\in \reals^{d+1}_{-\bf{0}}$. Hence, there exist extended profiles $v,v'\in \reals^{d+1}_{-\bf{0}}$ such that $v_{0}\neq v_{0}'$ and yet $\leak'(v) = \leak'(v')$. 
As both $v=(v_0,\bv),v'=(v_0',\bv')$ belong to $\reals^{d+1}_{-\bf 0}$, the supports of $\bv,\bv'$ are non-empty.
We consider the following two cases:

\textbf{Case 1.} The supports of $\bv,\bv'$ intersect, i.e., there exists a $k\in [d]$ such that $v_k\neq 0$ and $v_k'\neq 0$. In this case, 
consider a scenario in which $\cV=\{v\}\cup\bigcup_{1\leq l\leq d,l\neq k,}\{e_l\}$, where $e_l\in\reals^{d+1}_{-\bf{0}}$ a vector whose $l$-th coordinate is 1 and all other coordinates are zero. Clearly, $|\myset{S}|=|\cV|=d$, and $V=[\bv_i]_{i\in [d]}$ is invertible. Let $\ell^*=\leak'(\cV)$.
By Lemma \ref{lemma:LeakPlusMinus},
for all $s\in\reals$,
\begin{align}Z_+(s+2v_0{\bf e}_1,\ell^*)\ed Z_-(s,\ell^*)\, ,\label{withv}\end{align}
where 
 ${\bf e}_1\in\reals^d$ is 1 at coordinate $1$ and 0 everywhere else.
Similarly, in a scenario in which $\cV'=\{v'\}\cup\bigcup_{1\leq l\leq d,l\neq k,}\{e_l\}$, $V$ is again invertible. Crucially $\leak'(\cV')=\leak(\cV)=\ell^*$, so again  by Lemma~\ref{lemma:LeakPlusMinus}:
\begin{align}Z_+(s+2v_0'{\bf e}_1,\ell^*)\ed Z_-(s,\ell^*)\, ,\label{withv'}\end{align}
for all $s\in\reals^d$. 
Equations \eqref{withv},\eqref{withv'} imply that, for all $s\in \reals^d$:
\begin{align}
Z_+(s+\xi {\bf e}_1 ,\ell^*) \ed Z_+(s,\ell^*)\, \label{periodic}
\end{align}
where $\xi\equiv 2(v_{0} - v_{0}').$ In other words, the obfuscation is \emph{periodic} with respect to the direction ${\bf e}_1$.

Observe that for any $x\in\{-1,+1\}\times \reals^d$ and any $M\in \reals_+$,  we can construct a $x'\in\{-1,+1\}\times \reals^d$ and a $K\in \naturals$ such that (a) $x$,$x'$ differ only at  coordinate $k\in\{1,2,\ldots,d\}$, (b) $\<v,x-x'\> = K\xi$, and (c) $\|\bx-\bx'\|_2\geq M$. To see this, let $K$ be a large enough integer such that $\frac{K|\xi|}{|v_{k}|}>M$. Taking, $x_k'=x_k+K\xi/v_{k}$, and  $x'_l =x_l$ for all other $l$ in $\{0,1,\ldots,d\}$ yields a $x'$ that satisfies the desired properties (a) and (b).

\sloppy
Suppose that the learning protocol $\cR$ is applied to $\cV=\{v\}\cup\bigcup_{1\leq l\leq d,\neq k}\{e_l\}$ for a user with $x_0=+1$.  Fix a large $M>0$. For each $x$ and $x'$ constructed as above, by \eqref{periodic}, the obfuscated values generated by $\obfs'$  have an identical distribution.  Hence, irrespectively of how the estimator $\bhx'$ is implemented, either $\Em_{x,\cV}\{\|\bhx'(y',\cV)-\bx\|^2_2\}$ or $
\Em_{x',\cV}\{\|\bhx'(y',\cV)-\bx'\|^2_2\}$ must be $\Omega(M^2)$
which, in turn, implies that
$\sup_{x\in\{\pm 1\}\times\reals^d}\Em_{x,\cV}\{\|\bhx'(y',\cV)-\bx\|^2_2=\infty$.

Note that, since $\bv_j, j \in \myset{S},$ are linearly independent, the matrix $\sum_{j\in \myset{S}}\bv_j\bv_j^T$ is positive definite and thus invertible. Hence, in contrast to the above setting, the loss  \eqref{lsqloss} of MP in the case of Gaussian noise is finite.\\
\fussy
\textbf{Case 2}. The supports of $\bv,\bv'$ are disjoint. In this case $v,v'$ are linearly independent and, in particular, there exist $1\leq k,k'\leq d $, $k\neq k'$, such that  $v_k\neq 0$, $v_{k'}=0$ while  $v'_k=0$, $v'_{k'}\neq 0.$ Let  $\cV=\{v\}\cup\{v'\}\bigcup_{1\leq l\leq d,l\neq k,l\neq k'}\{e_l\}$. Then, $|\cV|=d$ and the matrix $V=[\bv_j]_{j\in \myset{S}}$ is again invertible. By swapping the positions of $\bv$ and $\bv'$ in matrix $V$ we can show using a similar argument as in Case 1 that
 for all $s\in \reals^d$:
\begin{align}
Z_+(s+\xi ({\bf e}_1 - {\bf e}_2),\ell^*) \ed Z_+(s,\ell^*)\, 
\label{periodic2}
\end{align}
where $\xi\equiv 2(v_{0} - v_{0}')$ and $\ell^*=\leak(\cV)$. \emph{I.e.}, $Z_+$ is periodic in the direction ${\bf e}_1-\bf{e}_2$. Moreover, 
for any $x\in\{-1,+1\}\times \reals^d$ and any $M\in \reals_+$,  we can similarly construct a $x'\in\{-1,+1\}\times \reals^d$ and a $K\in \naturals$ such that (a) $x$,$x'$ differ only at  coordinates $k,k'\in\{1,2,\ldots,d\}$,  and (b) $\<v,x-x'\> = -\<v',x-x'\>= K\xi$, and (c) $\|\bx-\bx'\|_2\geq M$: the construction  adds $K\xi/v_{k}$ at the $k$-th coordinate and subtracts $K\xi/v_{k'}'$ from the $k'$-th coordinate, where $K>M\max(v_k,v'_{k'})/\xi$.  A similar argument as in Case 1 can be used to show again that the estimator $\bhx'$ cannot disambiguate between $x$, $x'$ over $\cV$, yielding the theorem.\hspace*{\stretch{1}}  \qed

\makeatletter{}\section{Extensions}\label{sec:Extensions}\label{sec:Discussion}

We have up until now assumed that the analyst solicits ratings for a set of items $\myset{S}\subseteq [M]$, determined by the analyst before the user reveals her feedback.   In what follows, we discuss how our analysis can be extended in the case where the user only provides ratings for a subset of these items. We also discuss how the analyst should select $\myset{S}$, how a user can repeatedly interact with the analyst, and, finally, how to deal with multiple binary  and categorical features. 

\subsection{Partial Feedback}\label{sec:partial}

\begin{algorithm}[t]
\begin{small}
  \caption{\textsc{Midpoint Protocol with Sub-Sampling}}\label{alg:SelectionProtocol}
    \begin{algorithmic}
  \STATE \textbf{Analyst's Parameters}
  \STATE $\myset{S}\subseteq [M]$, $\cV=\{(v_{j0},\bv_j),\; j\in \mathcal{S}\}\subseteq\reals^{d+1}_{-\mathbf{0}}$
  \STATE $p=\{(p_1^{x_0},\ldots,p_{|\myset{S}|}^{x_0}), x_0 \in\{-,+\} \} \subseteq ([0,1]\times [0,1])^{|\myset{S}|}$
      \STATE
  \STATE \textbf{User's Parameters}
  \STATE $x_0 \in\{-1,+1\}$, $\myset{S}_0\subseteq \myset{S}$,  $r=\{r_j,j\in \myset{S}_0\}\in \reals^{|\myset{S}_0|}$
  \STATE
   \STATE DISCLOSURE: $\ell=\leak(\cV,p)$
   \STATE $\rho_j={p_j^{-}}/{p_j^{+}}$, for all $j\in \myset{S}$
    \STATE $\ell_j = (v_{j0},\rho_j)$, for all $j\in \myset{S}$
    \STATE
    \STATE OBFUSCATION SCHEME: ${{\myset{S}_R=\myset{S}_R(\myset{S}_0,x_0,\ell)}\atop{ y= \obfs(r_{\myset{S_R}},x_0,\ell)}}$
    \STATE $S_R=\emptyset$,  $y=\emptyset$
    \FORALL{$j \in \myset{S}$}
    \IF{$j\in  \myset{S}_0$}
    \STATE $b_j \sim \mathrm{Bern}\big(\min\big(1,(\rho_j)^{x_0}\big)\big)$
    \IF {$b_j =1$}
    \STATE $\myset{S}_R = \myset{S}_R\cup\{j\}$
    \STATE $y = y\cup\{r_j - x_0v_{j0}\}$
    \ENDIF
    \ENDIF
    \ENDFOR
    \STATE
    \STATE ESTIMATOR: $\bhx=\bhx(y,(\myset{S}_R,y),\cV)$
    \STATE Solve $\bhx = \argmin_{\bx\in\reals^d}\! \Big\{\!\sum_{j\in \myset{S}_R}\!\!\big(y_j\!-\!\<\bv_j,\bx\>\big)^2\Big\}$
          \end{algorithmic}
\end{small}
\end{algorithm}

There are cases of interest  where a user may not be able to generate a rating for all items in $\myset{S}$. This is especially true when the user needs to spend a non-negligible effort to determine her preferences (examples include rating a feature-length movie, a restaurant, or a book). In these cases, it makes sense to assume that a user may readily provide ratings for only a set $\myset{S}_0\subseteq \myset{S}$ (e.g., the movies she has already watched, or the restaurants she has already dined in, etc.). 

Our analysis up until now applies when the user rates an arbitrary set $\myset{S}$ \emph{selected by the analyst}. As such, it does not readily apply to this case: the set of items $\myset{S}_0$ a user rates may depend on the private feature $x_0$ (e.g., some movies may be more likely to be viewed by men or liberals). In this case, $x_0$ would be be inferable not only from the ratings she gives, but also \emph{from which items she has rated}. 

In this section, we describe  how to modify the midpoint protocol to deal with this issue. Intuitively, to ensure her privacy, rather than reporting obfuscated ratings for \emph{all} items she rated (i.e., set $\myset{S}_0$), the user can reveal ratings \emph{only for a subset $\myset{S}_R$ of $\myset{S}_0$}. This \emph{sub-sampling} of $\myset{S}_0$ can be done so that  $\myset{S}_R$ has a distribution that is \emph{independent of $x_0$}, even though $\myset{S}_0$ does not. Moreover, to ensure a high estimation accuracy, the user ought to ensure $\myset{S}_R$ is as large as possible, subject to the constraint $\myset{S}_R\subseteq \myset{S}_0$. 

\smallskip\noindent\textbf{Model.} Before we present the modified protocol, we describe our assumption on how $\myset{S}_0$ is generated. For each $j\in [M]$, denote by $p^{+}_j$, $p^-_j$ the probabilities that a user with private feature $+1$ or $-1$, respectively, has rated item $j \in \myset{M}$. Observe that, just like the extended profiles $v_j$, this information can be extracted from a dataset comprising ratings by non privacy-conscious users. Let  $p = [(p^+_j,p^-_j)]_{j\in \myset{S}}\in  ([0,1]\times [0,1])^{|\myset{S}|} $ be the vector of pairs of probabilities.

We assume that the privacy-conscious user the has rated items in the set $S_0\subseteq \myset{S}$, whose distribution is given by the product form\footnote{Here, we slightly abuse notation, e.g., denoting with $p^{x_0}_j$ the parameter $p^{+}_j$ when $x_0=+1$.}:
\begin{align} \Pm_{x,\cV, p}(\myset{S}_0 = \myset{A}) = \prod_{j\in \myset{S}} p_j^{x_0}\!\!\!\!\! \prod_{j\in \myset{S}\setminus \myset{A}}\!\!\! (1- p_j^{x_0}), ~\text{for all}~\myset{A}\subseteq \myset{S}.  \label{product}\end{align}
Put differently, items $j\in \myset{S}$ are rated independently, each with a probability $p_j^{x_0}$.
Conditioned on $\myset{S}_0$, we assume that the user's ratings $r_j$, $j\in \myset{S}_0$, follow the linear model \eqref{eq:LinearModel} with Gaussian noise.  
 Note that the distribution of $\mathcal{S}_0$ depends on $x_0$: e.g., items $j$ for which $p_j^+ >p_j^-$ are more likely to be rated when $x_0=+1$. 

\smallskip\noindent\textbf{Midpoint Protocol with Sub-Sampling.}
We now present a modification of the midpoint protocol, which we refer to as the \emph{midpoint protocol with sub-sampling} (MPSS).
MPSS is  summarized in Algorithm~\ref{alg:SelectionProtocol}. 
First, along with the disclosure of the biases $v_{j0}, j \in \myset{S}$, the analyst also discloses the ratios $\rho_j \equiv p_j^-/p_j^+$, for $j\in \myset{S}$. Having access to this information, the 	user sub-samples items from $\myset{S}_0$; each  item $j\in S_0$ is included in the revealed set $\myset{S}_R$ independently with probability:
\begin{align}\Pm_{x,\cV,p}(i\in \myset{S}_R\mid j\in S_0) = \min\left(1,(\rho_j)^{x_0}\right)\, .\label{select}\end{align}
Having constructed $\myset{S}_R$, the user reveals ratings for $j\in S_R$ after subtracting $x_0v_{j0}$, as in  MP. Finally, the analyst estimates $\bhx$ through a least squares estimation over the obfuscated feedback, as in MP in the case of Gaussian noise. 
To gain some intuition behind the selection of the set $\myset{S}_R$, observe by \eqref{product} and \eqref{select} that, for any $j\in \myset{S}$,
\begin{align}\Pm_{x,\cV,p}(j \!\in\! \myset{S}_R)\! =\!p_j^{x_{0}} \min\big(1, (p_j^-\!/\!p_j^+)^{x_0}\big)\!  =\! \min (p_j^+\!,p_j^-).\label{mpssprob}\end{align}
This immediately implies that MPSS is privacy preserving: both the distribution of $\myset{S}_R$ and of the obfuscated ratings $y$ do not depend on $x_0$. In fact, it is easy to see that since $\myset{S}_R\subseteq\myset{S}_0$ \emph{any privacy preserving protocol must satisfy $\Pm_{x,\cV,p}(j \in \myset{S}_R)\leq \min (p_j^+\!,p_j^-)$}: indeed, if for example $p_j^+<p_j^-$, then a user rating $j$ with probability higher than $p_j^+$ must have $x_0=-1$ (see \techreport{our technical report \cite{techrep}}{Lemma~\ref{lem:atmost} in the appendix} for a formal proof of this statement). As such, MPSS reveals ratings for a set $\myset{S}_R$ of \emph{maximal size}, in expectation.

This intuition can be used to establish the optimality of MPSS among a wide class of learning protocols, under \eqref{eq:LinearModel} (with Gaussian noise) and \eqref{product}. We can again show that it attains optimal accuracy. Moreover, it also involves a minimal disclosure: a protocol that does not reveal the ratios $\rho_j$, $j\in \myset{S}$, necessarily rates strictly fewer items than MPSS, in expectation. We provide a formal proof of these statements, as well as a definition of the class of protocols we consider, in \techreport{our technical report \cite{techrep}}{ Appendix~\ref{app:proofofsub}}.

\subsection{Item Set Selection}\label{sec:itemselection}
Theorem~\ref{privacytheorem} implies that the analyst \emph{cannot} improve the prediction of the private variable $x_0$ through its choice of $\myset{S}$, under the midpoint protocol. In fact, the same is true under any privacy-preserving learning protocol: irrespectively of the analyst's choice for $\myset{S}$, the  obfuscated feedback $y$ will be statistically independent of $x_0$. 

The analyst can however strategically select $\myset{S}$ to effect the accuracy of the estimate of the non-private profile $\bx$. Indeed,  the analyst should attempt to select a set $\myset{S}$
that maximizes the  accuracy of the estimator $\bhx$. In settings where least squares estimator is minimax (e.g., when  noise is Gaussian), there are well-known techniques for addressing this problem. Eq.~\eqref{lsqloss} implies that  it is natural to select $\myset{S}$ by solving  
\begin{align}
\begin{split}
\text{Maximize:} &  \quad F(\myset{S}) = -\mathrm{tr}\big[\big(\textstyle\sum_{j\in \myset{S}} \bv_j\bv_j^T\big)^{-1}\big] \\
\text{subject to:}& \quad |\myset{S}| \leq B,  \myset{S} \subseteq [M],
\end{split}\label{eq:optselect}
\end{align}
where $B$ is the number of items for which the analyst solicits feedback. The optimization problem \eqref{eq:optselect} is NP-hard, and has been extensively studied in the context of experimental design (see, e.g., Section 7.5 of \cite{boyd}). The objective function $F$ is commonly referred to as the \emph{A-optimality criterion}. Convex relaxations of \eqref{eq:optselect} exist when $\myset{S}$ is a multiset, i.e., when items with the same profile can be presented to the user multiple times, each generating an i.i.d.~response \cite{boyd}. When such repetition is not possible, constant approximation algorithms can be constructed based on the fact that $F$ is increasing and submodular (see, e.g., \cite{Friedland20133872}). In particular, given any set $\myset{S}^*\subset [M]$ of  items whose profiles are linearly independent, there exists a polynomial time algorithm for maximizing  $F(\myset{S}\cup \myset{S}^*)-F(\myset{S}^*)$ subject to $|\myset{S}|\leq B$ within a  $1-\frac{1}{e}$ approximation factor~\cite{nemhauser}.

\subsection{Repeated Interactions}\label{sec:repeated}
Our analysis (and, in particular, the optimality of our protocols) persists even when the user repeatedly interacts with the analyst. In particular, the user may return to the service multiple times, each time asked to rate a different set of items $\myset{S}^{(k)}\setminus [M]$, $k\geq 1$. The selection of the set $\myset{S}^{(k)}$ could be \emph{adaptive}, i.e., depend on the obfuscated feedback the user has revealed up until the $k-1$-th time. For each $k$, the analyst again would apply MP (or MPSS), again disclosing the same information for each $\myset{S}^{(k)}$, the only difference being that the estimator $\bhx$ would be applied to \emph{all} revealed obfuscated ratings $y^{(1)},...,y^{(k)}$. This repeated interaction is still perfectly private: the joint distribution of the obfuscated outputs $y^{(k)}$ does not depend on  $x_0$. Moreover,  each estimation remains maximally accurate at each interaction, while each disclosure is again minimal.

\makeatletter{}\subsection{Categorical Features}\label{appendix:categorical}\sloppy
 We discuss below how to express categorical features as multiple binary features through binarization, and illustrate how to incorporate both cases in our analysis. The standard approach to incorporating a categorical feature $x_0\in \{1,2,\ldots,K\}$ in matrix factorization is through  category-specific biases (see, e.g., \cite{Koren:2009}), i.e., \eqref{eq:LinearModel} is replaced by
\begin{align}\label{eq:cat}
r = \langle \bx,\bv_j\rangle  + b_j^{x_0} +\varepsilon_{j}, \quad j\in [M] 
\end{align}
where $b_j^k\in\reals$, $k\in [K]$ are \emph{category-dependent} biases.  Consider a representation of $x_0\in [K]$ as a binary vector $\bx_0\in \{-1,+1\}^K$ whose $x_0$-th coordinate is $+1$, and all other coordinates are $-1$. I.e., the coordinate $\bx_{0k}$ at $k\in [K]$ is given   $+1$ if $k=x0$ and $-1$ o.w. 
For $k\in [K]$, let $\bb_{jk} \equiv b_j^k/2$ and define $\mu_j \equiv \sum_{k\in [K]}b_j^k/2$. 
Then, observe that \eqref{eq:cat} is equivalent to
\begin{align}
r &= \<\bx,\bv_j\> +\sum_{k\in [K]} \bx_{0k} \bb_{jk} +\mu_j+\varepsilon_j,\nonumber\\
& = \<\bx',\bv_j'\>+ \sum_{k\in [K]} \bx_{0k} \bb_{jk} +\varepsilon_j,\quad j \in [M],\label{eq:binarized}    
\end{align}
where $\bx'=(\bx,1)\in\reals^{d+1}$ and $\bv_j' =(\bv_j,\mu_j)\in \reals^{d+1}$.

\fussy
Hence, a categorical feature can be incorporated in our analysis as follows. First, given a dataset of ratings by non-privacy conscious users that reveal their categorical feature $x_0\in [K]$, the analyst first ``binarizes'' this feature, constructing a vector $\bx_0\in \{-1,1\}^K$ for each user. It then performs matrix factorization on the ratings using \eqref{eq:binarized}, learning vectors $\bv_j'\in \reals^{d+1}$, and biases $\bb_j=(\bb_{jk})_{k\in [K]}\in \reals^{K}$. A privacy-conscious user subsequently interacts with the analyst using the standard scheme as follows. The analyst discloses the biases $\bb_j$ for each $j\in \myset{S}$, and the user reveals $y_j = r_j-\sum_{k\in [K]} \bx_{0k} \bb_{jk}$, $j\in \myset{S}$, where $\bx_{0k}$ is her binarized categorical feature. Finally, the analyst infers $\bx'$ through linear regression over the pairs $(y_j,\bv'_j)$, $j\in \myset{S}$.

\makeatletter{}
\begin{figure*}[!t]
\parbox{2.2in}{\ignore{\setlength{\unitlength}{0.32\textwidth}
	\begin{picture}(0.7,0.7)
			\put(0,0){\includegraphics[width=0.25\textwidth]{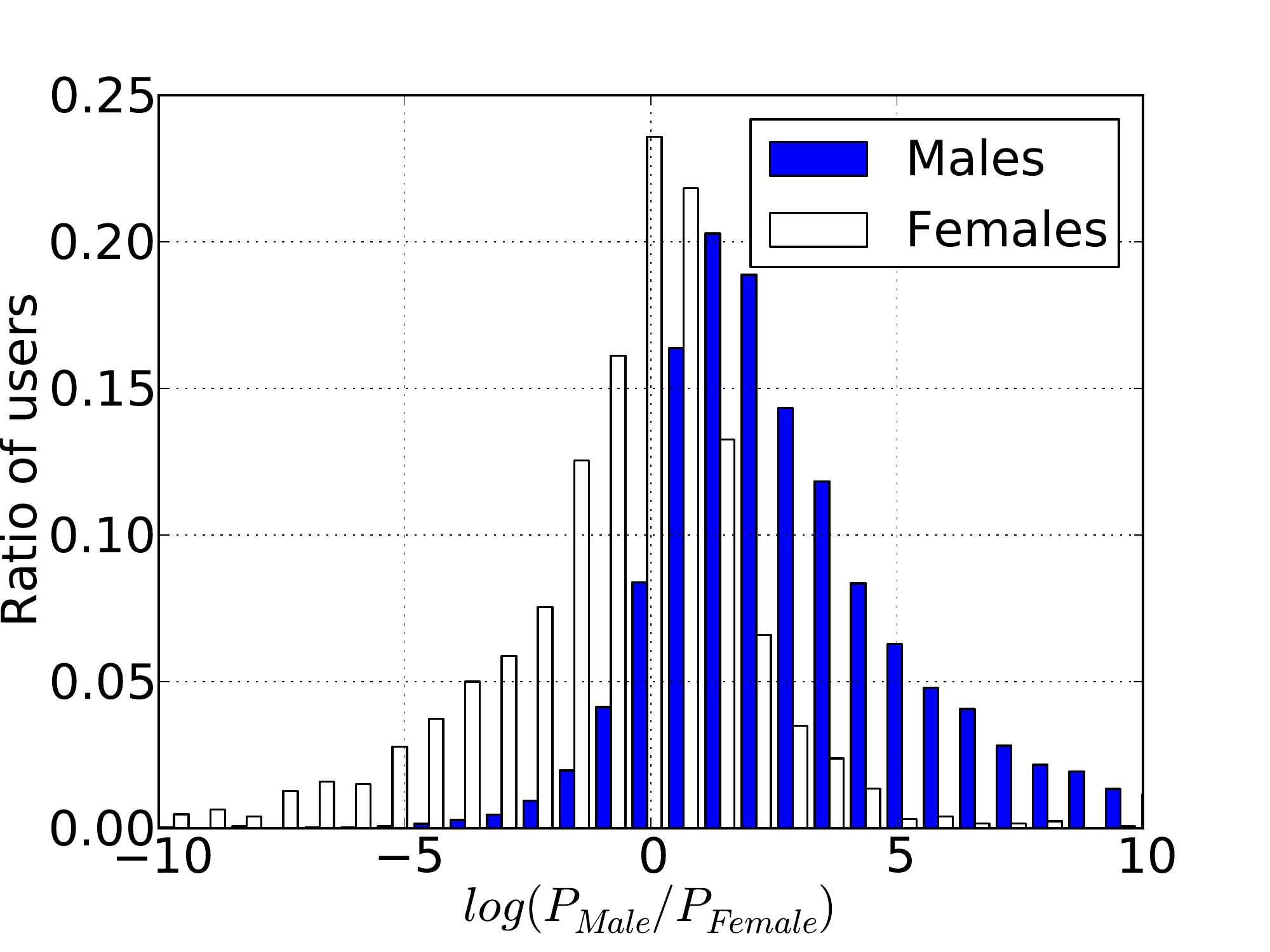}}
			\put(0.35,0.55){\tiny (a)}
		\end{picture}	
          \begin{picture}(0.7,0.3)
			\put(0,0){\includegraphics[width=0.25\textwidth]{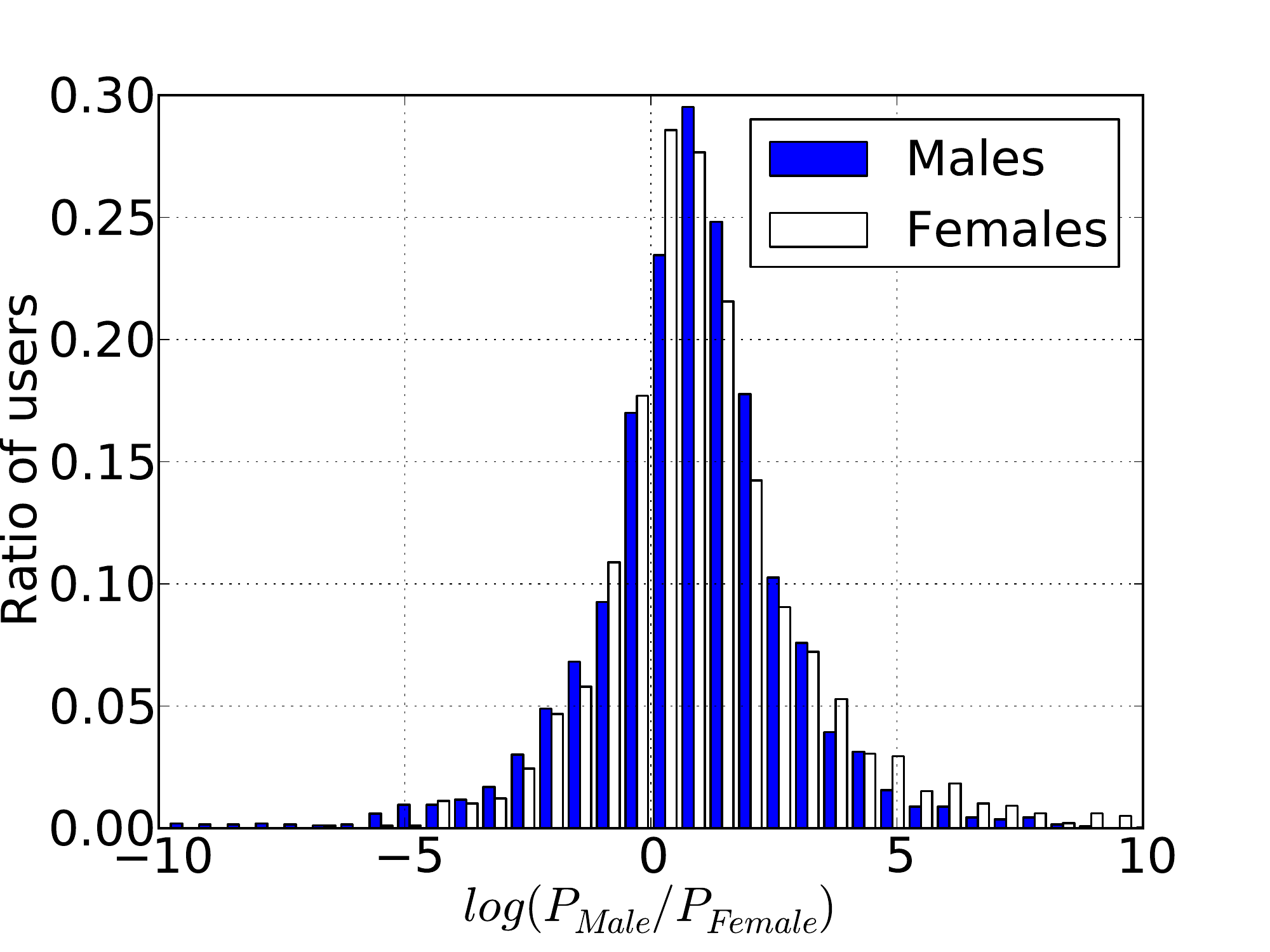}}
			\put(0.35,0.55){\tiny(b)}
		\end{picture}
   \caption{\small Distribution of inference probabilities for males and females using \ml dataset and logistic regression (a) before obfuscation and (b) after MPSS obfuscation}
    \label{fig:hists}
}
\scriptsize
\begin{center}
\begin{tabular}{|@{}l@{}|@{}l@{}|@{}c@{}|@{}c@{}|@{}c@{}|}
\hline
Dataset& Private feature & Users & Items & Ratings \\
\hline
 & All & 365 & 50 & 18K \\
PTV & Gender (F:M) & 2.7:1 & - & 2.7:1 \\
& Politics (R:D) & 1:1.4 & - & 1:1.4 \\
\hline
  & All & 6K & 3K & 1M \\
\ml & Gender (F:M) & 1:2.5 & - & 1:3 \\
& Age (Y:A) & 1:1.3 & -  &1:1.6 \\
\hline
 & All & 26K & 9921 & 5.6M \\
\fl & Gender (F:M) & 1.7:1 & - & 1.5:1 \\
\hline
\end{tabular}
\caption{\small Statistics of the datasets used for evaluation. The ratios represent the class skew
Females:Males (F:M) for gender, Young:Adult(Y:A) for age and Republican:Democrats (R:D)
for political affiliation.}
\label{table:stats}
\end{center}
\vspace{-5mm}
}\qquad
\begin{minipage}{0.666666666666666666\textwidth}
\subfloat[PTV - Gender]{
\includegraphics[width=0.5\textwidth]{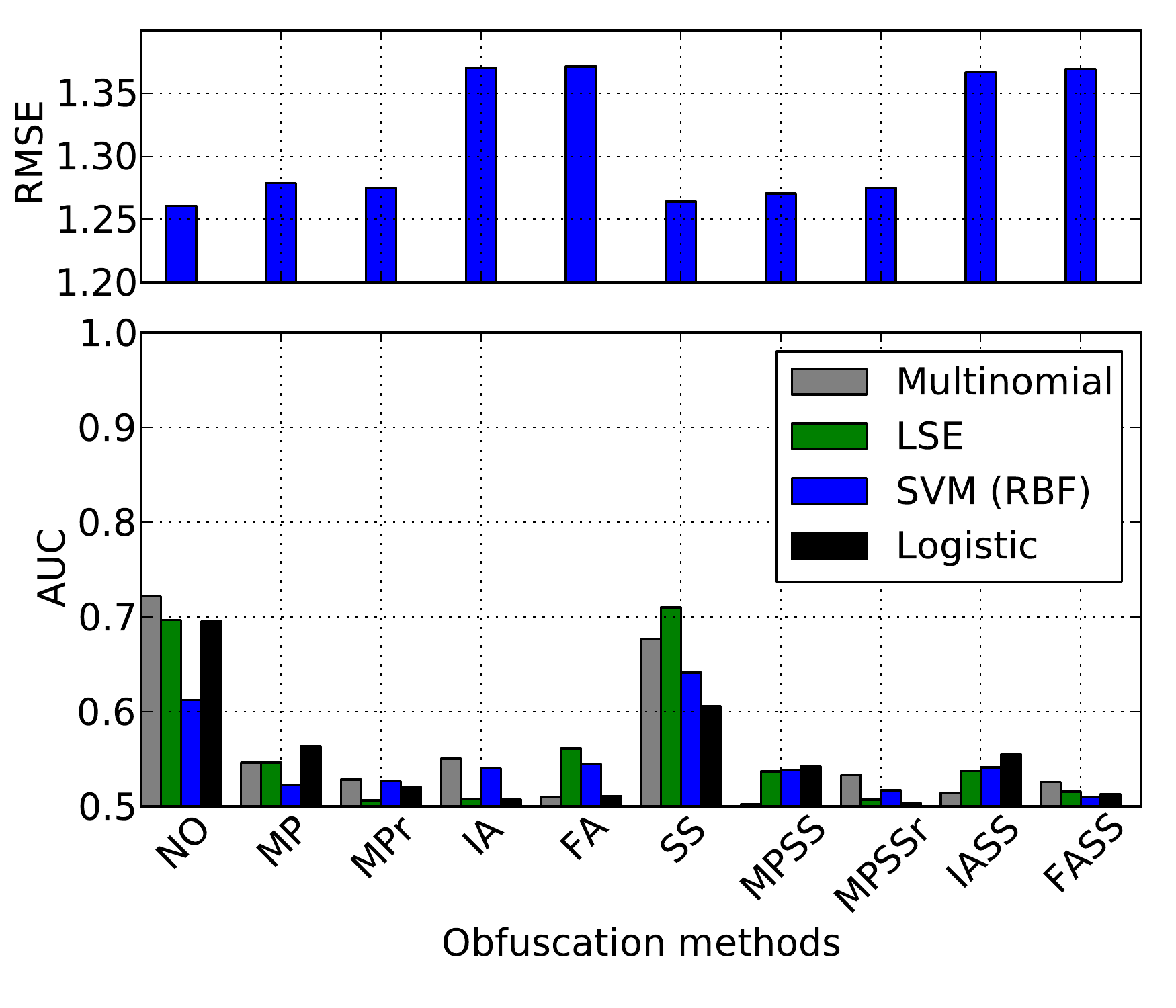}
\label{fig:ptv-gender}
}\subfloat[PTV - Politics]{
\includegraphics[width=0.5\textwidth]{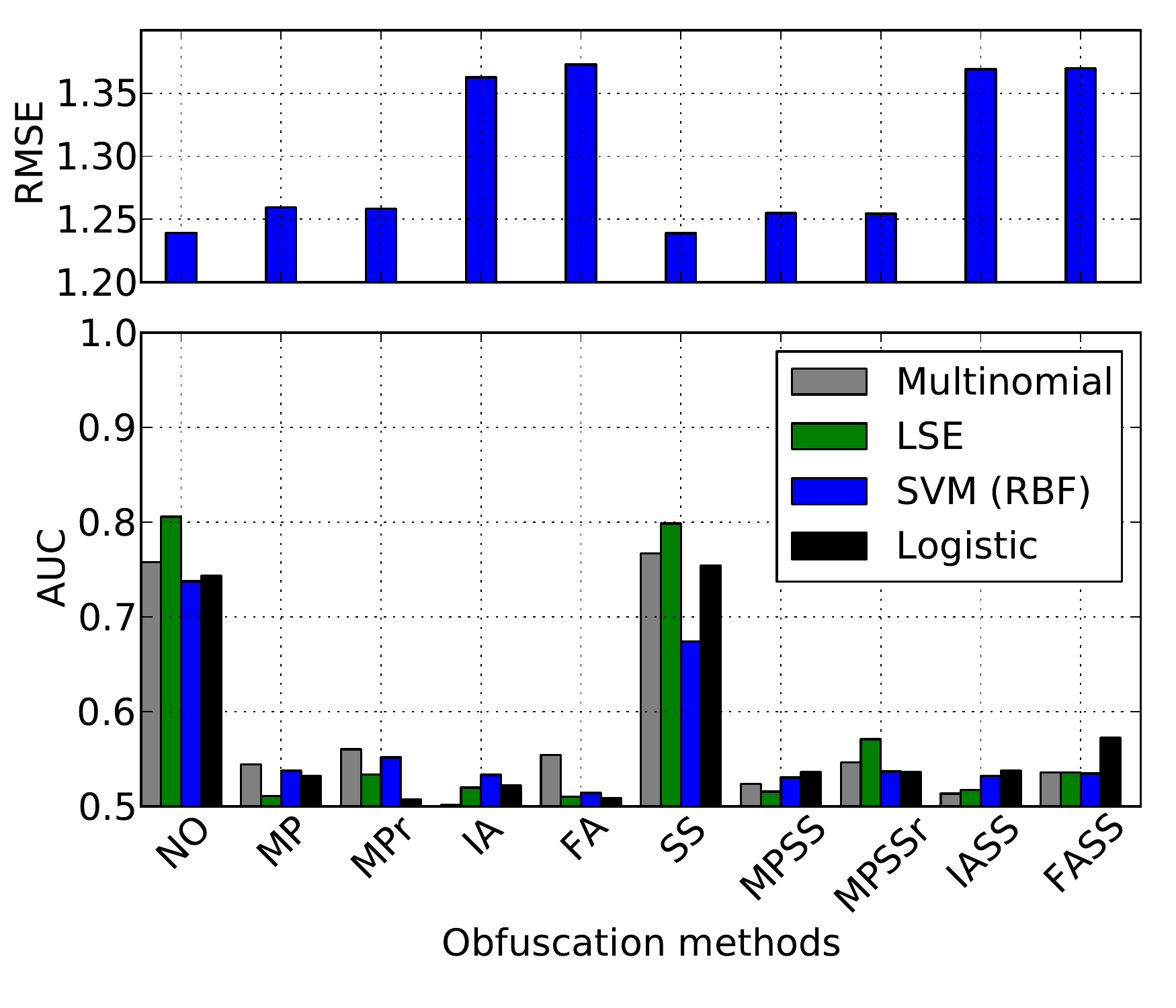}
\label{fig:ptv-politics}
}\caption{\small Privacy risk and prediction accuracy on PTV, obtained using four classifiers and obfuscation schemes (NO-No obfuscation, MP - Midpoint Protocol, r - Rounding, IA - Item Average, FA - Feature Average, SS - Sub-Sampling). The proposed protocol (MP) is robust to privacy attacks with hardly any loss in predictive power.}\label{fig:bar_plots}\end{minipage}\end{figure*}

\begin{figure*}[!t]
\centering
\subfloat[\ml~- Gender]{
\includegraphics[width=0.333333333333333333333\textwidth]{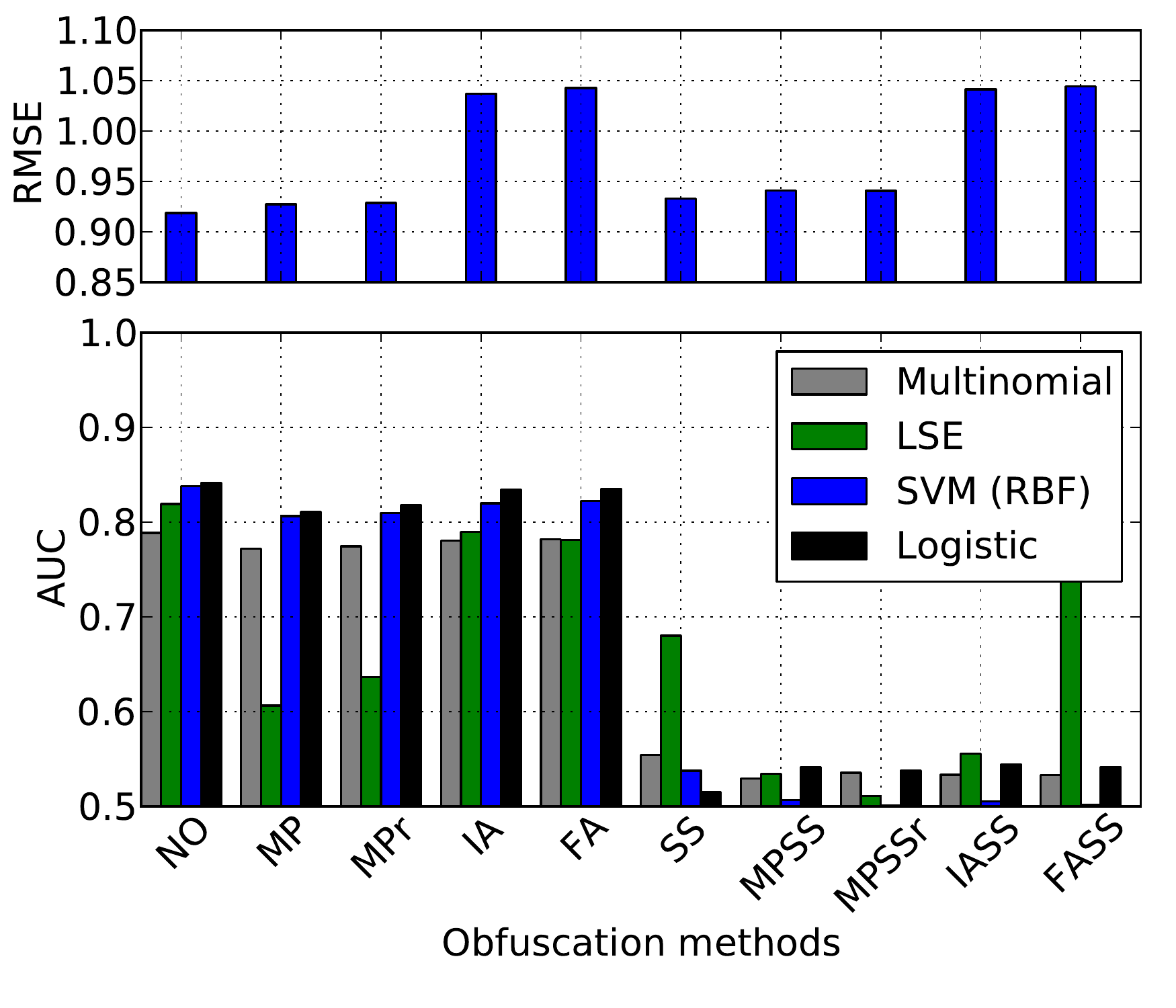}
\label{fig:ml-gender}
}
\subfloat[\ml~- Age]{
\includegraphics[width=0.333333333333333333333333\textwidth]{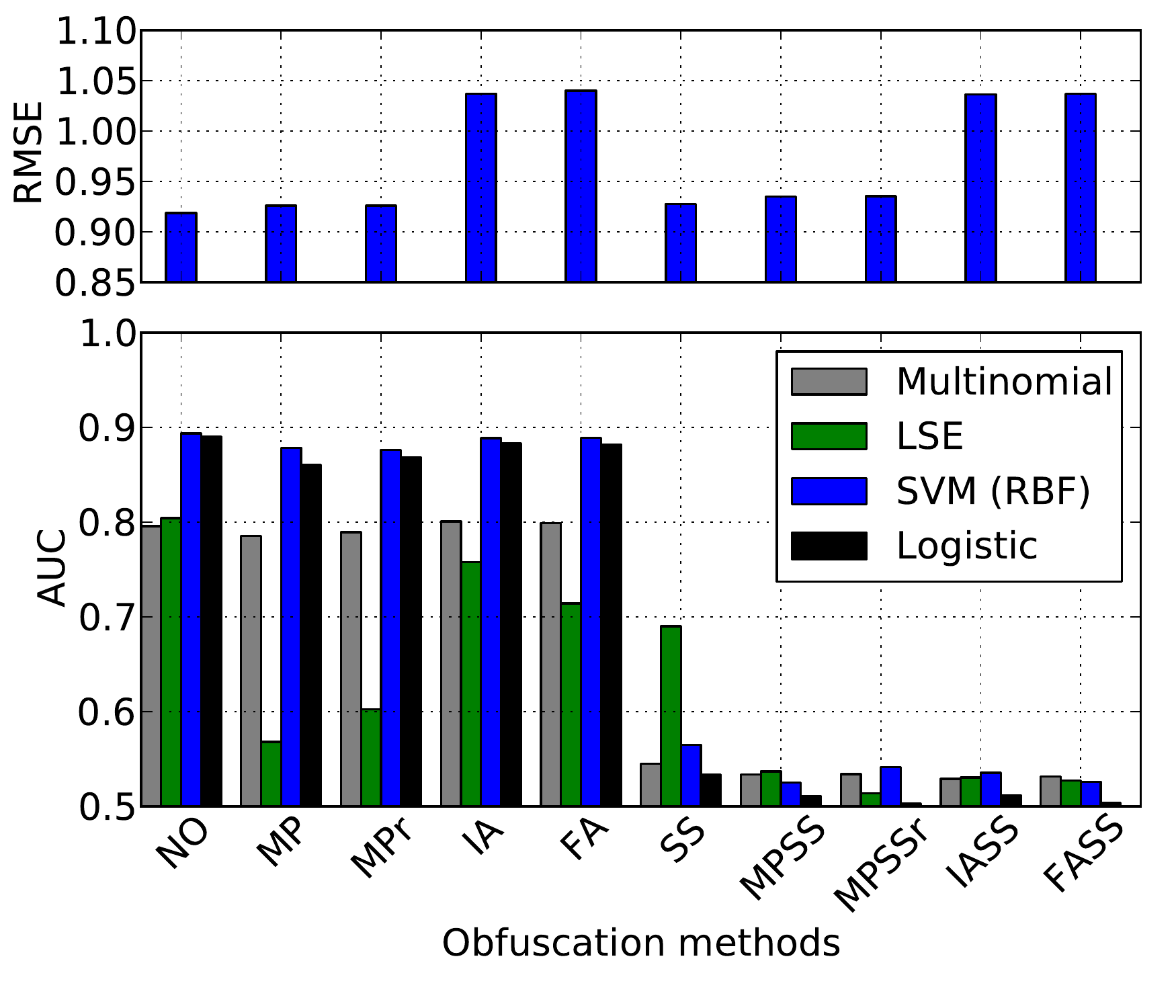}
\label{fig:ml-politics}
}
\subfloat[\fl~- Gender]{
\includegraphics[width=0.333333333333333333333333\textwidth]{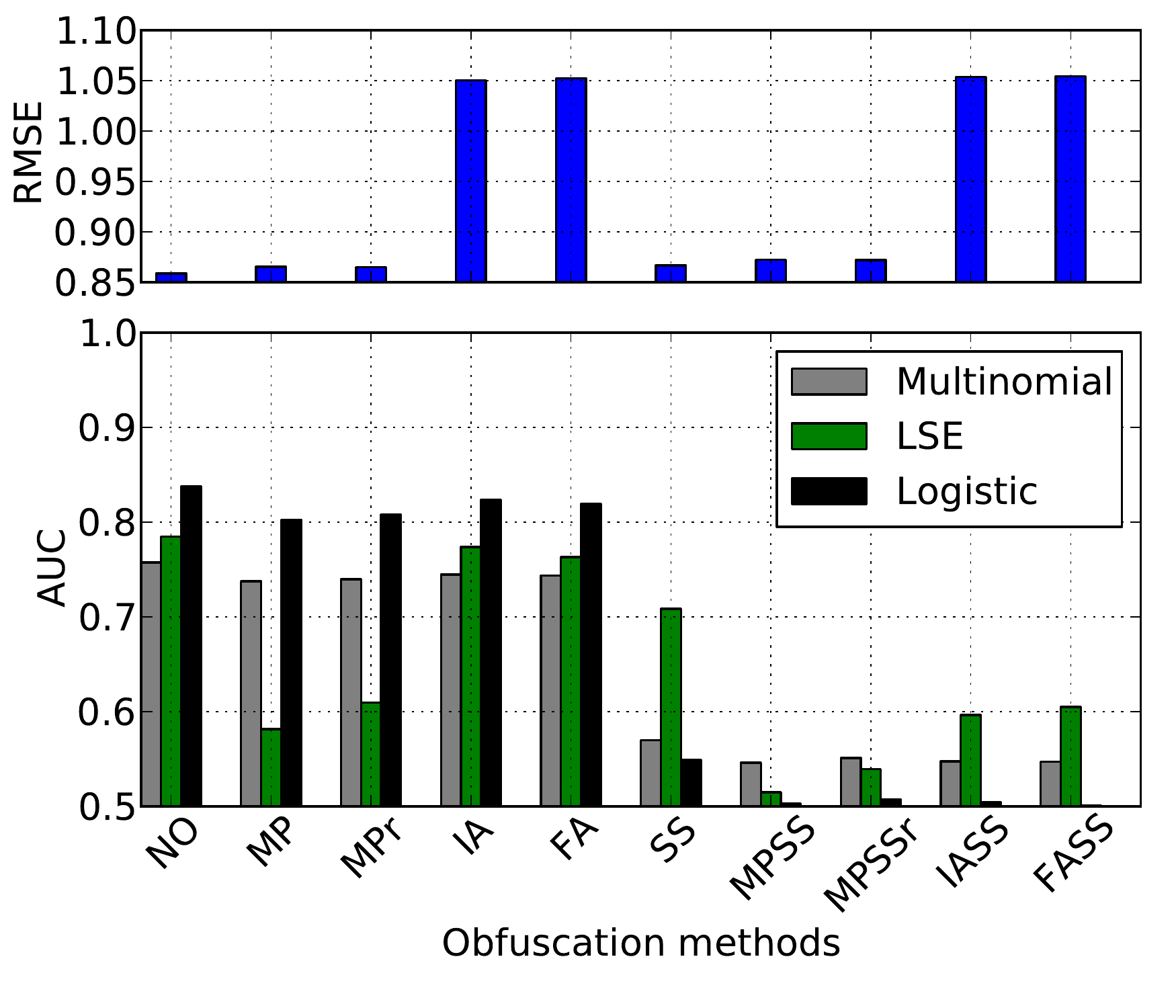}
\label{fig:fl-gender}
}
\caption{\small Privacy risk and prediction accuracy on \ml and \fl (sparse datasets), obtained using four classifiers and obfuscation schemes (NO-No obfuscation, MP - Midpoint Protocol, r - Rounding, IA - Item Average, FA - Feature Average, SS - Sub-Sampling). The proposed protocol (MPSS) is robust to privacy attacks without harming predictive power.}
\label{fig:bar_plots_ml_fl}
\end{figure*}

\section{Evaluation}
\label{sec:evaluation}

\ignore{
ALLL RESULTS

 & ml|gender|age & fl|gender & politics|gender40|politics40
gender & age & gender & gender40 & politics40 &
fastlda & logistic & multinomial & svm & RMSE & fastlda & logistic & multinomial & svm & RMSE & fastlda & logistic & multinomial & svm & RMSE & fastlda & logistic & multinomial & svm & RMSE & fastlda & logistic & multinomial & svm & RMSE &
no & 0.819 & 0.841 & 0.810 & 0.745 &  0.919 & 0.804 & 0.890 & 0.817 & 0.817 &  0.918 & 0.785 & 0.838 & 0.757 & x &  0.859 & 0.697 & 0.695 & 0.722 & 0.634 &  1.257 & 0.806 & 0.743 & 0.758 & 0.649 &  1.242 &
sp & 0.606 & 0.811 & 0.795 & 0.716 &  0.928 & 0.568 & 0.861 & 0.808 & 0.786 &  0.926 & 0.582 & 0.803 & 0.738 & x &  0.865 & 0.546 & 0.564 & 0.546 & 0.567 &  1.284 & 0.511 & 0.532 & 0.545 & 0.569 &  1.255 &
ga & 0.781 & 0.835 & 0.803 & 0.732 &  1.043 & 0.714 & 0.882 & 0.819 & 0.804 &  1.040 & 0.763 & 0.819 & 0.744 & x &  1.053 & 0.561 & 0.511 & 0.510 & 0.548 &  1.372 & 0.510 & 0.509 & 0.554 & 0.524 &  1.375 &
ma & 0.790 & 0.834 & 0.804 & 0.733 &  1.037 & 0.758 & 0.883 & 0.820 & 0.816 &  1.037 & 0.774 & 0.824 & 0.745 & x &  1.050 & 0.507 & 0.508 & 0.550 & 0.538 &  1.368 & 0.520 & 0.522 & 0.501 & 0.538 &  1.362 &
spr & 0.636 & 0.818 & 0.797 & 0.723 &  0.929 & 0.603 & 0.868 & 0.811 & 0.796 &  0.926 & 0.609 & 0.808 & 0.740 & x &  0.865 & 0.506 & 0.521 & 0.528 & 0.543 &  1.274 & 0.534 & 0.507 & 0.560 & 0.522 &  1.261 &
ss & 0.680 & 0.515 & 0.567 & 0.536 &  0.933 & 0.690 & 0.534 & 0.563 & 0.504 &  0.928 & 0.709 & 0.549 & 0.570 & x &  0.867 & 0.710 & 0.606 & 0.677 & 0.537 &  1.263 & 0.798 & 0.754 & 0.767 & 0.668 &  1.240 &
sssp & 0.534 & 0.542 & 0.542 & 0.571 &  0.940 & 0.537 & 0.510 & 0.552 & 0.533 &  0.934 & 0.515 & 0.503 & 0.546 & x &  0.872 & 0.537 & 0.542 & 0.502 & 0.594 &  1.272 & 0.516 & 0.536 & 0.524 & 0.597 &  1.255 &
ssspr & 0.511 & 0.538 & 0.548 & 0.569 &  0.940 & 0.514 & 0.503 & 0.551 & 0.534 &  0.935 & 0.539 & 0.507 & 0.551 & x &  0.872 & 0.507 & 0.504 & 0.533 & 0.538 &  1.278 & 0.571 & 0.536 & 0.546 & 0.541 &  1.255 &
ssga & 0.780 & 0.541 & 0.544 & 0.568 &  1.045 & 0.527 & 0.504 & 0.548 & 0.537 &  1.037 & 0.605 & 0.501 & 0.547 & x &  1.054 & 0.516 & 0.513 & 0.526 & 0.503 &  1.375 & 0.536 & 0.572 & 0.536 & 0.530 &  1.373 &
ssma & 0.555 & 0.545 & 0.545 & 0.565 &  1.042 & 0.531 & 0.512 & 0.546 & 0.534 &  1.036 & 0.597 & 0.505 & 0.547 & x &  1.054 & 0.537 & 0.555 & 0.514 & 0.504 &  1.368 & 0.517 & 0.538 & 0.514 & 0.511 &  1.368 &

}

\ignore{
\begin{table}[t!]
\scriptsize
\begin{center}
\begin{tabular}{|l|l|c|c|c|}
\hline
Dataset& Private feature & Users & Items & Ratings \\
\hline
 & All & 365 & 50 & 18K \\
PTV & Gender (Female:Male) & 2.7:1 & - & 2.7:1 \\
& Political Views (R:D) & 1:1.4 & - & 1:1.4 \\
\hline
  & All & 6K & 3K & 1M \\
\ml & Gender (Female:Male) & 1:2.5 & - & 1:3 \\
& Age (Young:Adult) & 1:1.3 & -  &1:1.6 \\
\hline
 & All & 26K & 9921 & 5.6M \\
\fl & Gender (Female:Male) & 1.7:1 & - & 1.5:1 \\
\hline
\end{tabular}
\caption{\small Statistics of the datasets used for evaluation}
\label{table:stats}
\end{center}
\vspace{-5mm}
\end{table}
}
\ignore{
\begin{figure*}[h!]
\centering
\subfloat[PTV - Gender]{
\includegraphics[width=0.3\textwidth]{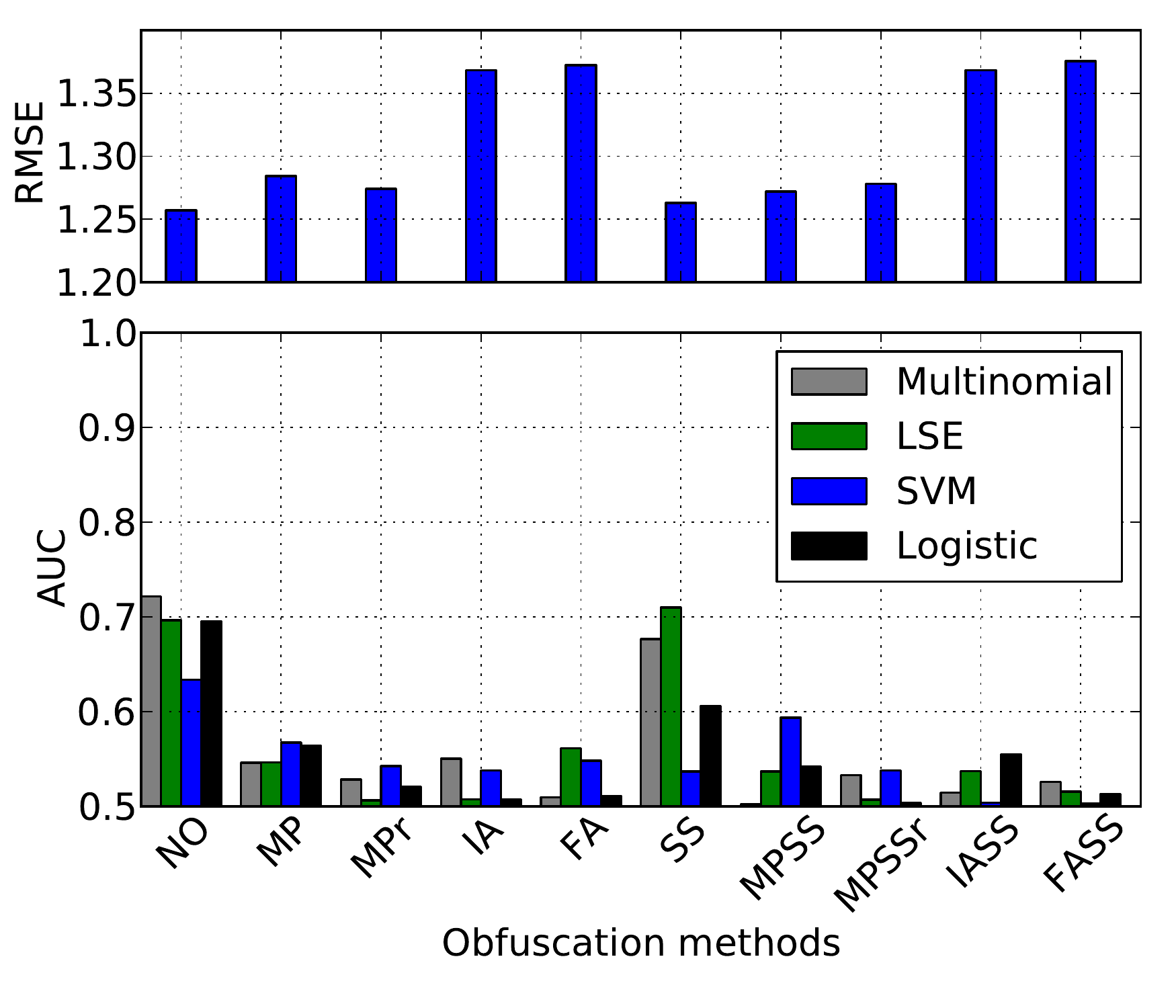}
\label{fig:ptv-gender}
}
\subfloat[PTV - Politics]{
\includegraphics[width=0.3\textwidth]{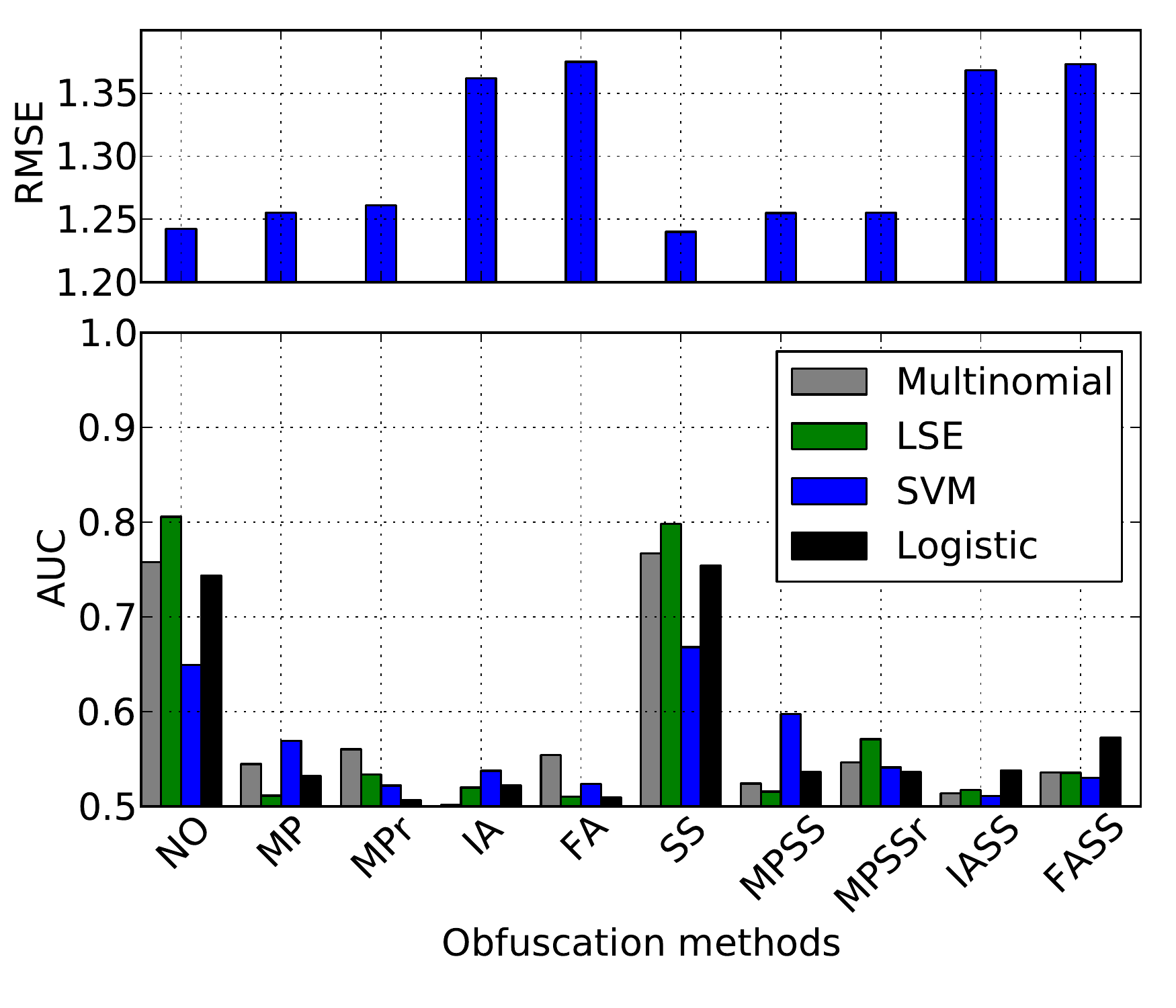}
\label{fig:ptv-politics}
}
\caption{\small Privacy risk and prediction accuracy obtained using four classifiers and obfuscation schemes (NO-No obfuscation, MP - Midpoint Protocol, r - Rounding, IA - Item Average, FA - Feature Average, SS - Sub-Sampling)}
\label{fig:bar_plots}
\end{figure*}
}

In this section we evaluate our protocols on real-world datasets. Our experiments
confirm that MP and MPSS are indeed able to protect the privacy of users
against inference algorithms, including non-linear algorithms, with little impact on prediction accuracy.

\subsection{Experimental Setup}
We study two types of datasets, sparse and dense. In sparse datasets, the set of items a user rates is often correlated with the private feature. This does not happen in dense datasets because all users rate all (or nearly all) items. 

\smallskip\noindent\textbf{Datasets.}
We evaluate our methods on three datasets: Politics-and-TV, \ml and \fl.
Politics-and-TV (PTV)~\cite{salman:2013} is a ratings dataset that includes
5-star ratings of users to 50 TV-shows and, in addition, each user's political affiliation (Democrat or
Republican) and gender. To make it dense, we consider only users that rate over 40 items, resulting in 365 users; 280 provide ratings to all 50 TV shows.
\ml\footnote{\url{http://www.grouplens.org/node/73}} and \fl\footnote{\url{http://www.sfu.ca/~sja25/datasets/}}~\cite{Jamali:2010} are movie recommender
systems in which users rate movies from a catalog of thousands of movies. Both \ml and \fl
datasets include user gender. \ml also  includes age groups; we categorize users
as \emph{young adults} (ages 18--35), or \emph{adults} (ages 35--65). We preprocessed \ml and \fl to consider only users with at least 20 ratings,
and items that were rated by at least 20 users.
Table~\ref{table:stats} summarizes the statistics of these three datasets.

\smallskip\noindent\textbf{Methodology.}
Throughout the evaluation, we seek to quantify the privacy risk to a user
as well as the impact of obfuscation on the prediction accuracy. To this end, we perform 10-fold cross-validation as follows. We split the users in each dataset into 10 folds. We use 9 folds as a \emph{training set} (serving the purpose of a dataset of non privacy-conscious users in Figure~\ref{fig:setting})  and 1 fold as a \emph{test set} (whose users we treat as privacy-conscious).

We use the training set to (a) compute extended profiles for each item by performing matrix factorization,  (b) empirically estimate the probabilities $p^+$, $p^-$ for each item, and (c)   train multiple classifiers, to be used to infer the private features. We describe the details of our MF implementation and the classifiers we use below.

We split the ratings of each user in the test set into two sets by randomly selecting 70\% of the ratings as the first set, and the remaining 30\% as the second set.  We obfuscate the ratings in the first set using MP, MPSS,  and several baselines as described in detail below. The obfuscated ratings are given as input to our classifiers to infer the user's private feature. We further estimate a user's extended profile using the LSE method described in Section~\ref{sec:newuser}, and use this profile (including both $\bx$ and the inferred $x_0$) to predict her ratings on the second set. For each obfuscation scheme and classification method, we measure the \emph{privacy risk} of the inference through these classifiers using the area under the curve (AUC)  metric \cite{hastie,hanley1982characteristic}. Moreover, for each obfuscation scheme, we measure the \emph{prediction accuracy} through the root mean square error (RMSE) of the predicted ratings. We cross-validate our results by computing the AUC and RMSE 10 times, each time with a different fold as a test set, and reporting average values.
We note that the AUC ranges from 0.5 (perfect privacy) to 1 (no privacy).

\smallskip\noindent\textbf{Matrix Factorization.} We use 20 iterations of stochastic gradient descent~\cite{Koren:2009} to perform MF on each training set. For each item, feature biases $v_{j0}$ were computed as the
half distance between the average item ratings per each private feature value. The
remaining features $\bv_j$ were computed through matrix factorization.
We computed optimal regularization parameters and the dimension $d=20$ through an additional 10-fold cross validation. 

\smallskip\noindent\textbf{Privacy Risk Assessment.}
We apply several standard classification methods to infer the private feature from the training ratings, namely Multinomial Na\"ive Bayes~\cite{blurme:2012}, Logistic Regression (LR), non-linear Support Vector Machines (SVM) with a Radial Basis Function (RBF) kernel, as well as the LSE \eqref{jointmle}. The input to the LR, NB and SVM methods comprises the ratings of all items provided by the user as well as zeros for movies not rated, while LSE operates only on the ratings that the user provides. As SVM scales quadraticaly with the number of users, we could not execute it on our largest dataset (\fl, c.f.~Table~\ref{table:stats}). 
\smallskip\noindent\textbf{Obfuscation Schemes.}
When using MP, the obfuscated rating may not be an integer value, and may even be
outside of the range of rating values which is expected by a recommender system. Therefore,
we consider a variation of MP that rounds the rating value to an integer in the range $[1,5]$.
Given a non-integer obfuscated rating $r$, which is between two
integers $k=\lfloor r \rfloor$ and $k+1$, we perform rounding by
assigning the rating $k$ with probability $r-k$ and the
rating $k+1$ with probability $1-(r-k)$, which on expectation gives the desired rating $r$.
Ratings higher than 5 and those lower than 1 are truncated to 5 and 1, respectively.
We refer to this process as {\em rounding}, and denote the obfuscation
scheme as MPr for midpoint protocol with rounding and MPSSr for
midpoint protocol with sub-sampling and rounding.

We also consider two alternative methods for obfuscation. First,
the {\em item average} (IA) scheme replaces a user's rating with the average rating of the item, computed from the training set.
Second, the {\em feature average} (FA) scheme replaces the user's rating with the
average rating provided by the feature classes (\eg, males and females), each with probability 0.5.

Finally, we evaluate each of the above obfuscation schemes, \ie, MP, MPr, IA and FA, together with
sub-sampling (SS). As a baseline, we also evaluated the privacy risk and the prediction accuracy when \emph{no obfuscation} scheme is used (NO).

\ignore{
\begin{figure*}[t]
\centering
\includegraphics[width=0.4\textwidth]{fig/bars_politics40}
\caption{\small Privacy risk and prediction accuracy obtained using different classifiers and obfuscation schemes}
\label{fig:tradeoff}
\end{figure*}
}

\ignore{
\definecolor{Gray}{gray}{0.9}
\begin{table*}[thbp]
   \centering
   \scriptsize
   \begin{tabular}{ l | l  l l  l l   |  llll}
   \hline
     & \multicolumn{5}{c}{MovieLens Age}&\multicolumn{4}{c}{Flixster Gender} \\ \hline
Obfuscation Scheme & \multicolumn{4}{c}{Inference Methods (AUC)}  & RMSE
&      \multicolumn{3}{c}{Inference Methods (AUC)}  & RMSE\\
        & LR  & Multinomial & LSE & SVM & \ignore{(RMSE estimate)} &   LR  & Multinomial & LSE &\\
    \hline

No Obfuscation & 0.890 & 0.796 & 0.804 & 0.817 &  0.918 & 0.838 & 0.749 & 0.785 &   0.859  \\
MP & 0.861 & 0.785 & 0.568 & 0.786 &  0.926 & 0.803 & 0.729 & 0.582 &   0.865   \\
MPr & 0.868 & 0.789 & 0.603 & 0.796 &  0.926 & 0.808 & 0.731 & 0.609 &   0.865   \\
Feature Avg & 0.882 & 0.799 & 0.714 & 0.804 &  1.040 & 0.819 & 0.736 & 0.763 &   1.053   \\
Item Avg & 0.883 & 0.801 & 0.758 & 0.816 &  1.037 & 0.824 & 0.737 & 0.774 &   1.050   \\
SS & 0.534 & 0.545 & 0.690 & 0.496 &  0.928 & 0.549 & 0.565 & 0.709 &    0.867    \\
 \rowcolor{Gray}
MPSS & 0.490 & 0.534 & 0.463 & 0.467 &  0.934 & 0.503 & 0.542 & 0.515 &   0.872     \\
MPSSr & 0.503 & 0.534 & 0.486 & 0.466 &  0.935 & 0.507 & 0.547 & 0.539 &   0.872    \\
SS Feature Avg & 0.496 & 0.531 & 0.527 & 0.463 &  1.037 & 0.501 & 0.543 & 0.605 &    1.054     \\
SS Item Avg & 0.488 & 0.529 & 0.531 & 0.466 &  1.036 & 0.505 & 0.544 & 0.597 &   1.054    \\
   \hline
   \end{tabular}
   \caption{\small Obfuscation Results}
   \label{tab:obfResults}
\end{table*}
}
\subsection{Experimental Results}
\smallskip\noindent\textbf{Dense Dataset.}
We begin by evaluating the obfuscation schemes on the dense PTV dataset using its two users' features (gender and political affiliation), illustrated in Figures~\ref{fig:ptv-gender} and \ref{fig:ptv-politics}, respectively. Each figure shows the privacy risk (AUC) computed using the 4 inference methods, and the prediction accuracy (RMSE) on applying different obfuscation schemes. 
Both figures clearly show that MP successfully mitigates the privacy risk (AUC is around 0.5) whereas the prediction accuracy is hardly impacted (2\% increase in RMSE). This illustrates that MP attains excellent privacy in practice, and that our modeling assumptions are reasonable: there is little correlation to the private feature after the category bias is removed. Indeed, strong correlations not captured by \eqref{eq:LinearModel} could manifest as failure to block inference after obfuscation, especially through the non-linear SVM classifier. This is clearly not the case, indicating that any dependence on the private feature not captured by \eqref{eq:LinearModel} is quite weak.

Adding rounding (MPr), which is essential for real-world deployment of MP, has very little effect
on both the AUC and RMSE.
Though IA and FA are successful in mitigating the privacy risk, they are suboptimal in terms of prediction. They severely impact the prediction accuracy, increasing the RMSE by roughly 9\%.
Finally, since this is a dense dataset, there is little correlation between the private feature and the set of items a user rates. Therefore, MP without SS suffices to mitigate the privacy risk.

\begin{figure}[t]
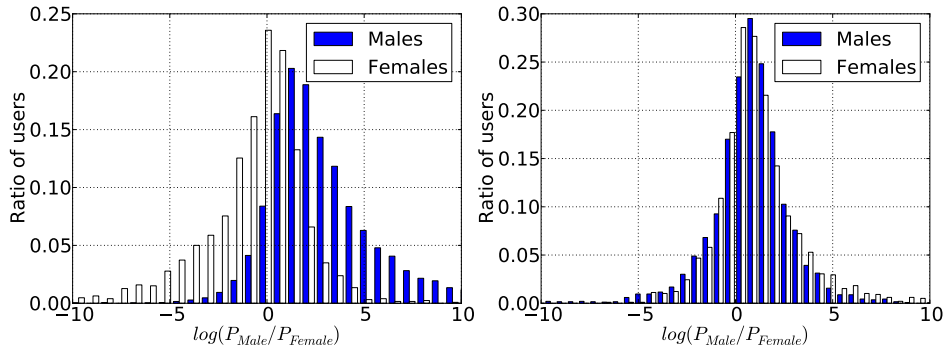

\centering
 \setlength{\unitlength}{0.63\columnwidth}
	\begin{picture}(0.7,0.56)
			\put(-0.1,-0.05){\includegraphics[width=0.55\columnwidth]{fig/hist_ml1m_logistic_before}}
					\end{picture}	
          \begin{picture}(0.7,0.3)
			\put(0.0,-0.05){\includegraphics[width=0.55\columnwidth]{fig/hist_ml1m_logistic_after_all}}
					\end{picture}
   \caption{\small Distribution of inference probabilities for males and females using \ml dataset and logistic regression (left) before obfuscation and (right) after MPSS obfuscation.}
    \label{fig:hists}
 \end{figure}

\smallskip\noindent\textbf{Sparse Datasets.}
Next, we investigate the effect of partial feedback by evaluating our obfuscation schemes on the \ml and \fl
datasets. In these datasets, in addition to the rating value, the set of items rated by a user can be correlated with her private feature. The results for obfuscation on \ml and \fl are in
Figure~\ref{fig:bar_plots_ml_fl}.

For all three datasets, we observe that MP successfully blocks inference  by LSE, but fails against the other three methods. This is expected, as the items rated are correlated to the private feature, and LSE is the only method among the four that is insensitive to this set. For the same reason, SS \emph{alone} defeats all methods \emph{except} LSE, which still detects the feature from the unobfuscated ratings (AUC 0.69--0.71).  Finally, MPSS and MPSSr have excellent performance across the board, both in terms of privacy risk (AUC 0.5--0.55) and impact on prediction accuracy (up to 5\%). In contrast, IA and FA significantly increase the RMSE (around 15\%).
We stress that, in these datasets, items are not rated independently as postulated by \eqref{product}. Nevertheless, the results above indicate that MPSS blocks inference in practice \emph{even when this assumption is relaxed.}

\ignore{
\begin{figure}[t]
\centering
\includegraphics[width=0.3\textwidth]{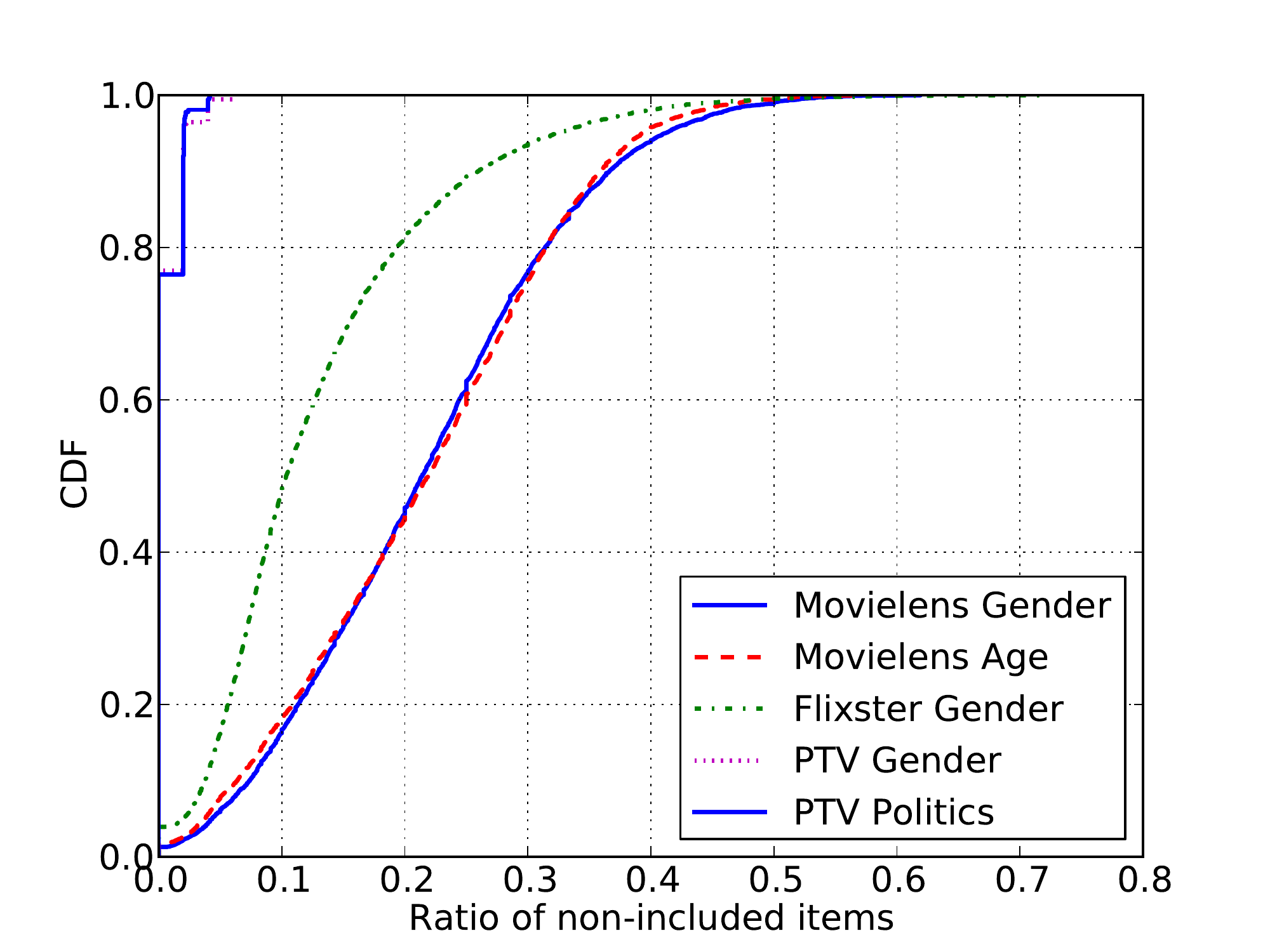}
\caption{\small Ratio of dropped items resulting from SS}
\label{fig:num_dropped}
\end{figure}
}

We further quantify the impact of sub-sampling in terms of the number of items that are not reported to the analyst in
a partial feedback setting. To this end, we compute the ratio of items excluded in the feedback reported by the user as a result of applying SS. We found that for the dense PTV dataset,  80\% of the users include all their ratings in their partial feedback, and the remaining 20\% exclude at most 5\% of their ratings.
For the sparse \fl and \ml datasets, 50\% of the users do not include 10\% and 23\% of their ratings, respectively.
All users include at least 50\% of their ratings, hence the prediction accuracy does not suffer with MPSS obfuscation. 

Overall, these results indicate that both MP and MPSS are highly effective in real-world datasets -- they mitigate the privacy risk while incurring very small impact on the prediction accuracy. Moreover, these obfuscation schemes work well even when facing non-linear inference methods, such as SVM, indicating that in practice, they eliminate any dependency between the ratings and the private feature.

\smallskip\noindent\textbf{Privacy Risk.}
To further illustrate how obfuscation defeats the inference of a private feature, we focus of the effect of obfuscation on logistic regression over the \ml Gender dataset. 
Figures~\ref{fig:hists}(a) and~\ref{fig:hists}(b) plot
the distribution of $\log\left(P_{Male}/P_{Female}\right)$
(a) before obfuscation
and (b) after obfuscation with MPSS. Here, $P_{Male}$ and $P_{Female}$ are the posterior probabilities for the two classes as obtained through
logistic regression.
Prior to obfuscation, there is
a clear separation between the distributions of males and females, enabling successful
gender inference (AUC 0.82 as shown in Figure~\ref{fig:ml-gender}). However, after obfuscation, the two distributions become
indistinguishable (AUC 0.54).

\smallskip\noindent\textbf{Privacy-Accuracy Tradeoff.}
Finally, we study the privacy-prediction accuracy tradeoff by applying an obfuscation scheme on an item rating with
probability $\alpha$, and releasing the real rating with probability $1-\alpha$. We vary the value
of $\alpha$ between 0 and 1 in steps of 0.1, that is,
when $\alpha=0$ no obfuscation is performed, and $\alpha=1$ means
that all ratings are obfuscated. For each $\alpha$, we measure the RMSE as well as the AUC of LSE. 

Figure~\ref{fig:auc-rmse} shows the resulting RMSE-AUC tradeoff curves
for MPSS, MPSSr and the two baseline obfuscation schemes with sub-sampling. The figure  shows that MPSS and MPSSr provide the best privacy-accuracy tradeoff (the slopes of the curves are almost flat), and consistently obtain
better prediction accuracy (lower RMSE) for the same privacy risk (inference AUC) than all other methods.

\begin{figure*}[t!]
\centering
\subfloat[\ml Gender]{
\includegraphics[width=0.2\textwidth]{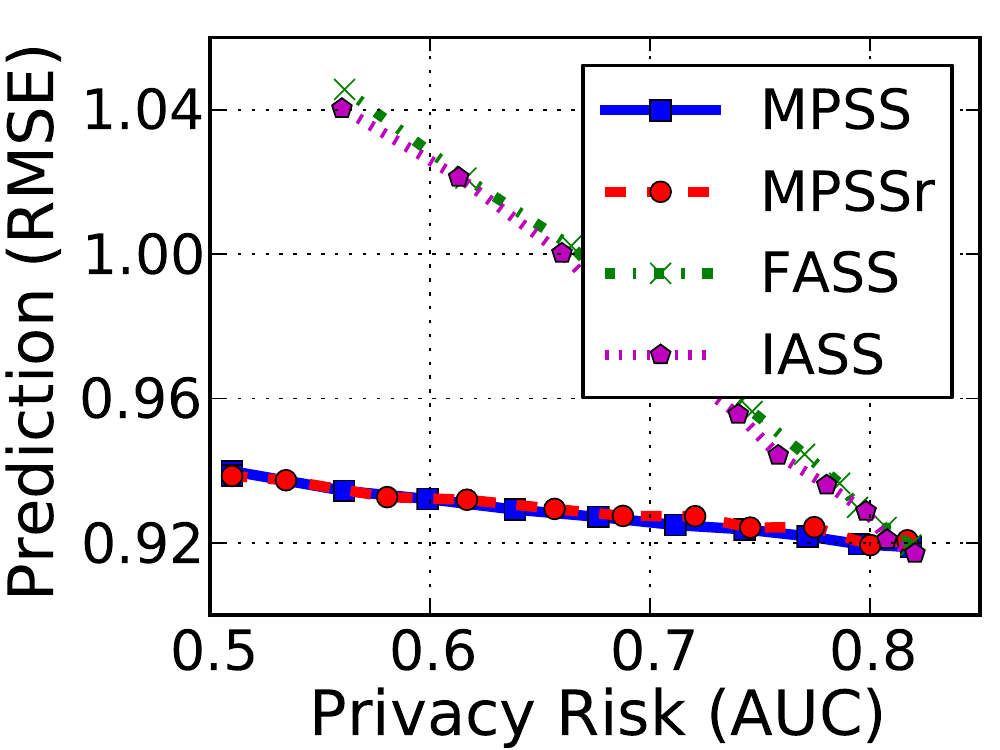}}\hspace*{-5pt}\subfloat[\ml Age]{
\includegraphics[width=0.2\textwidth]{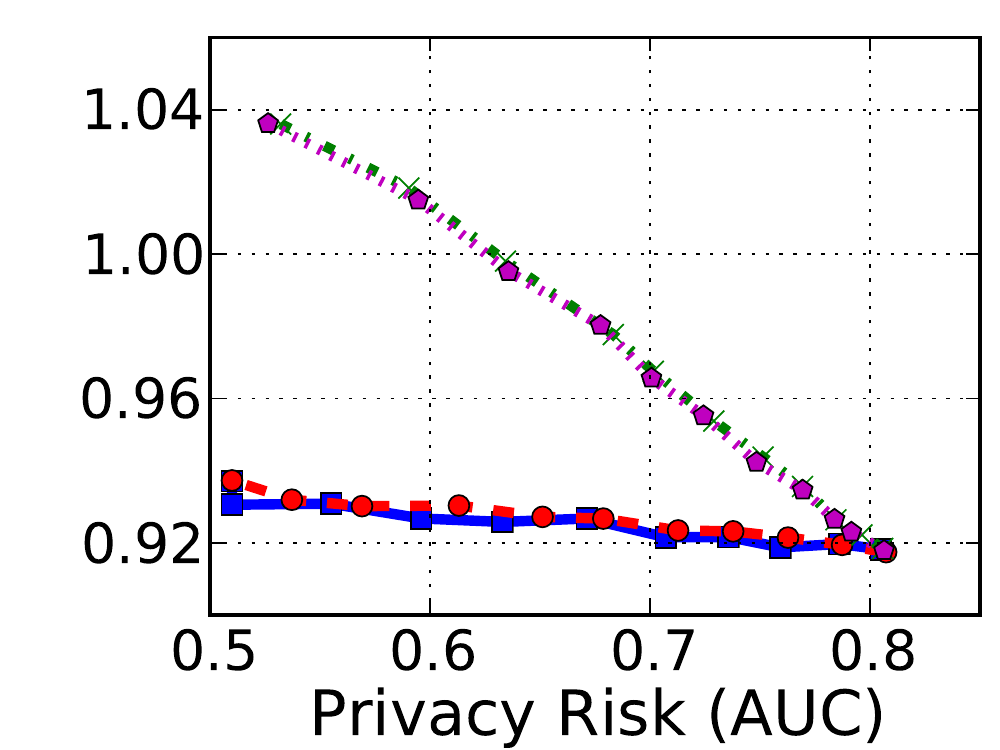}}\hspace*{-5pt}\subfloat[\fl Gender]{
\includegraphics[width=0.2\textwidth]{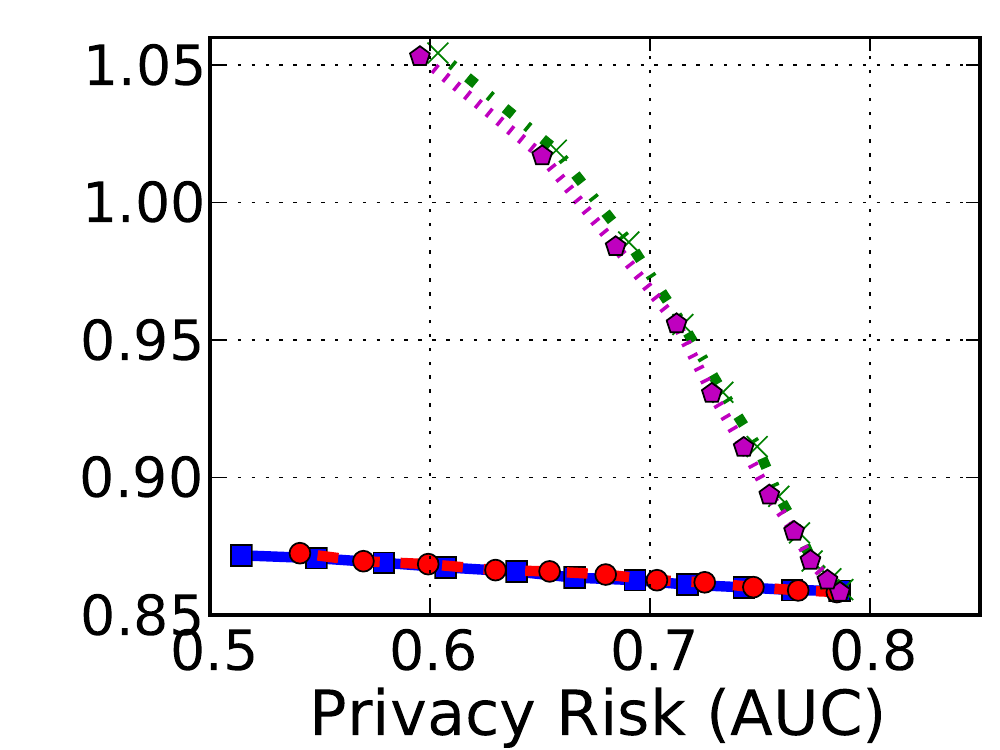}}\hspace*{\fill}\subfloat[PTV Gender]{
\includegraphics[width=0.2\textwidth]{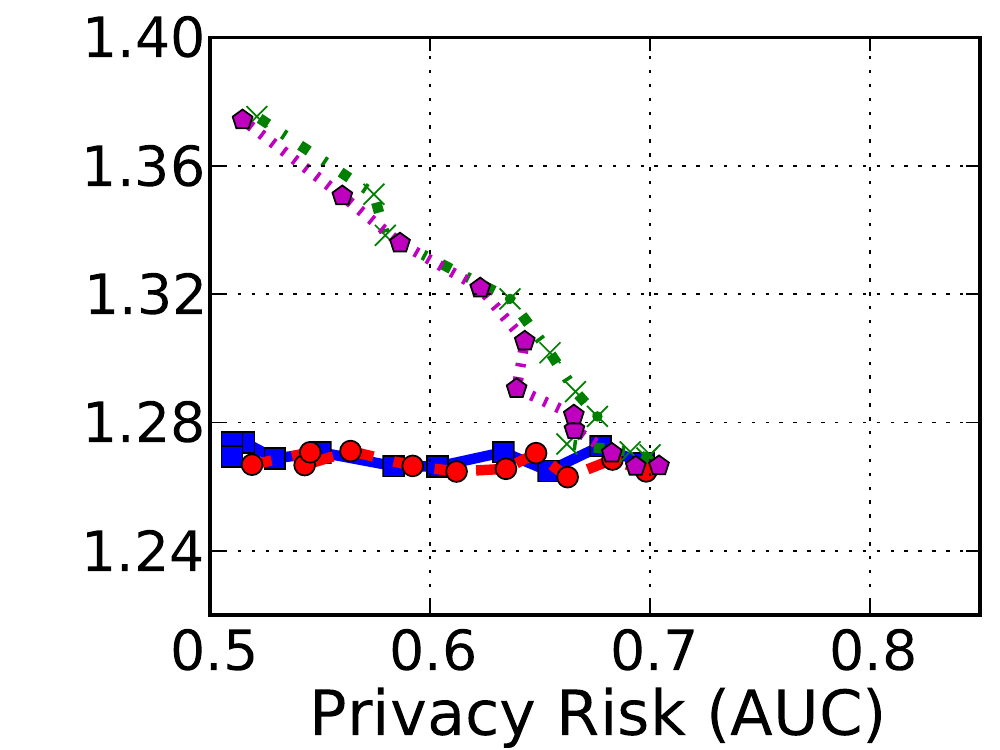}}\hspace*{-5pt}\subfloat[PTV Politics]{
\includegraphics[width=0.2\textwidth]{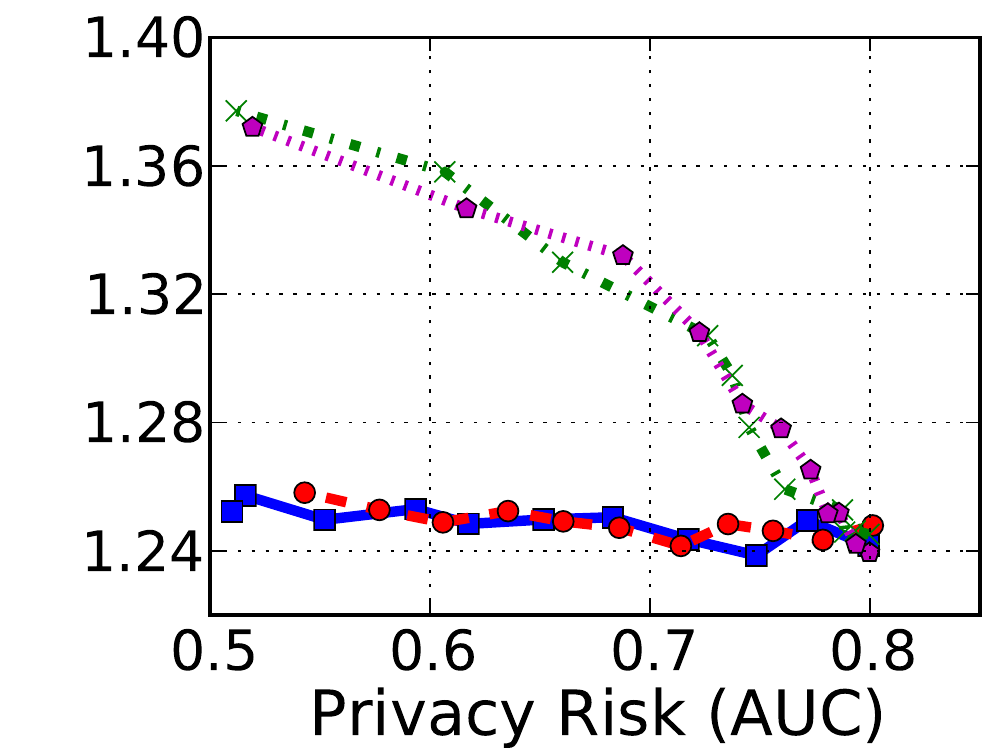}
}
\caption{\small Prediction accuracy (RMSE) vs.~privacy risk (LSE AUC) tradeoff for varying levels of obfuscation. Our proposed schemes (MPSS and MPSSr) have little impact on prediction accuracy as privacy is increased, whereas the prediction accuracy worsens dramatically under the baseline schemes.}
\label{fig:auc-rmse}
\end{figure*}

\makeatletter{}\section{Conclusion}
We have introduced a framework for reasoning about privacy, accuracy, and dislosure  tradeoffs in matrix factorization. This naturally raises the question of how these tradeoffs extend to other statistical or prediction tasks. An orthogonal direction to the one we pursued, when seeking a mininal disclosure, is to investigate schemes that are not perfectly private. It would be interesting to investigate, e.g., privacy-dislosure tradeoffs, rather than the usual privacy-accuracy tradeoffs one encounters in literature. For example, it is not clear whether one can construct protocols in which the distribution of the obfuscated output differs accross users with opposite private attribute by, e.g., an $\epsilon$ factor, but leak less information than MP: such protocols could, e.g., disclose a quantized version of the biases for each item.

\bibliographystyle{abbrv}
\begin{small}
\bibliography{references}
\end{small}

\appendix
\makeatletter{}\section{Optimality of MPSS}\label{app:proofofsub}
We begin by defining the class of learning protocols within which we will show that MPSS is optimal.

\medskip\noindent\textbf{Learning Protocols.} We  define a learning protocol as a tuple $\cR = (\leak,(\myset{S}_R,\obfs),\bhx)$ where:
\begin{packeditemize} 
\item The disclosure $L:\reals^{d+1}_{-\mathbf{0}}\times [0,1]\times [0,1]\to\mathcal{L}$ at each item $j\in \myset{S}$ is now a function of the extended profiles \emph{and} the rating probabilities, i.e., $\ell_j = L(v_j,p^+_j,p^-_j)$, $j\in \myset{S}$. We again denote by $\ell=L(\cV,p)\in \myset{L}^{|S|}$ the vector of disclosures.  
\item The obfuscated user feedback is constructed in two steps. First, the user computes a set $\myset{S}_R(S_0,x_0,\ell)\subseteq S_0$, which determines the items for which the she will reveal her rating to the analyst; second, having determined $\myset{S}_R$, the user produces an obfuscated output $y = \obfs(r_{S_R},x_0,\ell)$, where $r_{S_R}\in \reals^{|S_R|}$ the vector of ratings for items in set $\myset{S}_R$. 
Note that $\myset{S}_R$ is constrained to be a subset of $\myset{S}_0$: the user may only reveal ratings for a subset of the items she has truly rated. 
The feedback of the user to the analyst is the pair $(\myset{S}_R, y)$, i.e., the user reveals along with the feedback $y\in \cY$ the items for which she provides feedback.  Formally, these two are determined by a mapping $\myset{S}_R:2^{\myset{S}}\times\{-1,+1\}\times\cL^{|\myset{S}|}\to 2^{\myset{S}} $, that determines the set $\myset{S}_R\subseteq \myset{S}_0$, and a family of mappings  $\obfs:\reals^{|\myset{S}_R|}\times\{-1,+1\}\times\cL^{|\myset{S}_R|}\to \cY,$ one for each $\myset{S}_R\subseteq \myset{S}$. 
\item The estimator $\bhx=\bx((\myset{S}_R,y),\cV)$ is now a mapping $\bhx: 2^{\myset{S}}\times \cY \times (\reals^{d+1}_{-\mathbf{0}})^{|\myset{S}|}$ that depends on the user's feedback $(\myset{S}_R,y)$, as well as the profile information available to the analyst.
\end{packeditemize}
We can  naturally define partial orderings of learning protocols $\cR = (\leak,(\myset{S}_R,Y), \bhx)$  with respect to  the extent of disclosure by extending Definition \ref{def:disclosure} in a straightforward fashion. Regarding accuracy, we say that $\cR$ is more accurate than $\cR'$ if it yields a smaller expected $\ell_2$ loss \emph{conditioned on $\myset{S}_0$}. Finally, regarding privacy, we say that $\cR$ is privacy-preserving if the joint distribution of the random variables $(\myset{S}_R,y)$ does not depend on $x_0$: both the set $\myset{R}$, as well as the corresponding obfuscated feedback $y$, are equal in distribution when  $x_0=+1$ or $x_0=-1$ .   

We will further restrict our analysis to protocols that satisfy the following property.
\begin{definition}\label{def:pni}
Let 
\begin{align}\myset{S}^+ =\{j\in\myset{S}: \rho_j\leq 1\},\quad \myset{S}^- =\{j\in\myset{S}: \rho_j>1\}, \label{posneg}\end{align}
be the set of items more likely to be rated by ``positive'' and ``negative'' users respectively. 
We say that $\cR = (\leak, (\myset{S}_R,\obfs), \bhx)$ is \emph{positive-negative independent} (PNI) if the random sets
$\myset{S}_R\cap \myset{S}_+$ and $\myset{S}_R\cap \myset{S}_-$ are independent random variables.
\end{definition}
Note that MPSS is a PNI protocol, and so is any protocol in which the events $\{j\in\myset{S}_R\}$ are independent Bernoulli variables for every $j\in \myset{S}$.

\medskip\noindent\textbf{Optimality.} 
The following theorem holds
\begin{theorem}\label{thm:mpss}
Under \eqref{eq:LinearModel} with Gaussian noise, and \eqref{product}:
\begin{packedenumerate}
\item MPSS is privacy preserving.
\item There is no privacy preserving, PNI learning protocol that is strictly more accurate than MPSS.
\item  Any privacy preserving, PNI learning protocol that does not disclose as much information as MPSS must
also be strictly less accurate.
\end{packedenumerate}
\end{theorem}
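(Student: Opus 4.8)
The three claims parallel those of Theorem~\ref{privacytheorem}, and the plan is to reduce each to the corresponding statement for MP by conditioning on the revealed set $\myset{S}_R$, and then to control the \emph{law} of $\myset{S}_R$ itself through a stochastic-domination argument driven by the privacy constraint. Claim~1 is immediate: by \eqref{mpssprob} the inclusion probability $\Pm_{x,\cV,p}(j\in\myset{S}_R)=\min(p_j^+,p_j^-)$ does not depend on $x_0$, items are selected independently, and conditioned on $\{j\in\myset{S}_R\}$ the obfuscated rating $y_j=r_j-x_0 v_{j0}=\langle\bv_j,\bx\rangle+\varepsilon_j$ is free of $x_0$ exactly as in MP; hence the whole joint law of $(\myset{S}_R,y)$ factorizes into $x_0$-independent pieces.

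\textbf{Maximal accuracy (Claim~2).} I would argue in two stages. First, fix the realized revealed set $\myset{S}_R=A$ and note that the selection map $\myset{S}_R(\myset{S}_0,x_0,\ell)$ depends on the ratings only through $\myset{S}_0$, which given $x_0$ is independent of $\bx$; therefore conditioning on $\{\myset{S}_R=A\}$ does not bias $\bx$, and the residual problem is exactly MP run on the items of $A$. By Theorem~\ref{privacytheorem}.2 applied to this sub-instance (with Gaussian noise, so that least squares is minimax), the worst-case $\ell_2$ loss conditioned on $\myset{S}_R=A$ is at least the least-squares loss $\ell(A)\equiv\sigma^2\,\mathrm{tr}[(\sum_{j\in A}\bv_j\bv_j^T)^{-1}]$ of \eqref{lsqloss}, an equality MPSS attains by construction. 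Summing over $A$, the loss of any privacy-preserving PNI protocol is at least $\E[\ell(\myset{S}_R)]$, so it suffices to show MPSS minimizes $\E[\ell(\myset{S}_R)]$ over the law of $\myset{S}_R$. Since $\ell(\cdot)$ is \emph{decreasing} in set inclusion (the A-optimality objective $F=-\ell/\sigma^2$ is monotone, so enlarging $A$ shrinks the information matrix's inverse in the PSD order), this reduces to showing that the law of MPSS's $\myset{S}_R$ stochastically dominates that of every competitor.

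\textbf{Main obstacle: the domination lemma and the role of PNI.} The crux is this domination step, and it is delicate precisely because $\ell$ is \emph{not} additive in $A$, so matching marginal inclusion probabilities (the content of Lemma~\ref{lem:atmost}) is by itself insufficient: correlated selection could in principle beat independent selection. The extra leverage comes from the full privacy requirement together with $\myset{S}_R\subseteq\myset{S}_0$: since $\myset{S}_R$ is a thinning of $\myset{S}_0$, its law lies stochastically below $\mathrm{prod}(p_j^{x_0})$ under \emph{each} $x_0$; evaluating the $\myset{S}^+$ block under $x_0=-1$ and the $\myset{S}^-$ block under $x_0=+1$, where $\min(p_j^+,p_j^-)$ is attained (cf.\ \eqref{posneg}), shows the law of $\myset{S}_R\cap\myset{S}^+$ is $\preceq\mathrm{prod}(\min_j)_{j\in\myset{S}^+}$ and likewise for $\myset{S}^-$. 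Here \emph{PNI is exactly what is needed}: it lets me write the competitor's law as the independent product of its $\myset{S}^+$- and $\myset{S}^-$-restrictions, so the two block-wise dominations multiply into a single dominance $\myset{S}_R\preceq\mathrm{prod}(\min(p_j^+,p_j^-))_{j\in\myset{S}}$, which is precisely the law of MPSS's revealed set. Monotonicity of $-\ell$ then yields $\E[\ell(\myset{S}_R)]\ge\E[\ell(\myset{S}_R^{\mathrm{MPSS}})]$. I expect the careful verification of this domination chain, and of the claim that realizability as a thinning under both values of $x_0$ implies the stated dominance, to be the principal technical work.

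\textbf{Minimal disclosure (Claim~3).} I would follow the template of Theorem~\ref{privacytheorem}.3. If $\cR'$ does not disclose as much as MPSS, then $\leak'$ cannot recover the pair $(v_{j0},\rho_j)$; the bias component $v_{j0}$ is handled exactly as in Theorem~\ref{privacytheorem}.3 (forcing infinite loss), so I focus on the ratio $\rho_j$. Non-recoverability produces two items sharing the same value of $\leak'$ but with different $\rho$. Since the user's selection rule is a fixed function of $(\myset{S}_0,x_0,\ell')$, it must behave identically in the two scenarios; yet keeping $\Pm(j\in\myset{S}_R)$ equal under $x_0=\pm1$ and $\le\min(p_j^+,p_j^-)$ by Lemma~\ref{lem:atmost}, for both scenarios at once, forces the inclusion probability of some item strictly below $\min(p_j^+,p_j^-)$. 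Hence on a suitably constructed $\cV$ with linearly independent profiles MPSS reveals strictly more items in expectation, and by strict monotonicity of $\ell(\cdot)$ on linearly independent sets this gives a strictly larger expected loss, establishing strict sub-optimality.
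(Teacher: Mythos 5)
Your proposal is correct and follows essentially the same route as the paper: condition on the revealed set to reduce accuracy to Theorem~\ref{privacytheorem}, bound each item's inclusion probability by $\min(p_j^+,p_j^-)$ via the thinning-plus-privacy argument (the paper's Lemma~\ref{lem:atmost}, with a strict version, Lemma~\ref{lem:strict}, when $\rho_j$ is not disclosed), and use PNI to glue the blockwise dominations over $\myset{S}^+$ and $\myset{S}^-$ into a single dominance of MPSS's revealed set over the competitor's. The only difference is presentational: the paper realizes this last step as an explicit monotone coupling (Lemma~\ref{lem:dom}) where you phrase it as stochastic domination of product laws, which is the same argument.
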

\begin{proof}
We begin our proof of Theorem~\ref{thm:mpss} by establishing a few auxiliary results. Denote by $\cR=(\leak,(\myset{S}_R,\obfs),\bhx)$ be the MPSS protocol. Our first lemma states that \eqref{mpssprob} is an upper bound among privacy preserving protocols:
\begin{lemma}\label{lem:atmost} Let $\cR'=(\leak',(\myset{S}_R',\obfs'),\bhx')$ be privacy-preserving. Then
$\Pm_{x,\cV,p}(j \in \myset{S}_R') \leq \min (p_j^+,p_j^-) .$
 \end{lemma}
\begin{proof}
Recall that, by construction $\myset{S}_R'\subseteq \myset{S}_0$, the actual items rated by a user. Hence $\Pm_{(+1,\bx),cV,p}(j \in \myset{S}_R') \leq p_j^+$ and $\Pm_{(-1,\bx),cV,p}(j \in \myset{S}_R') \leq p_j^-$. As $\cR$ is privacy preserving, these l.h.s.~probabilities are equal, and the lemma follows.
\end{proof}
In fact, this inequality becomes strict if $\leak'$ does not disclose $\rho_j=p_j^-/p_j^+$, for some $j\in \myset{S}$. 

\begin{lemma}\label{lem:strict} Let $\cR'=(\leak',(\myset{S}_R',\obfs'),\bhx')$ be privacy-preserving, and suppose that $\leak'$ does not disclose $\rho={p^-}/{p^+}$--i.e., there is no $\phi:\cL'\to \reals$ such that $p^-/p^+ = \phi (\leak'(v,p^+,p^-))$ for all $v,p^+,p^-$. Then, there exist values $p^+,p^-\in [0,1]$,  an extended profile $v\in \reals^{d+1}_{-\mathbf{0}}$, and an $x_0\in\{-1,+1\}$ such that $\Pm_{x,\cV,p} (j\in \myset{S}_R)< \min (p_j^+,p_j^+)$ for all $\cV$ and $p$ such that $v_j=v$ and $(p_j^+,p_j^-)=(p^+,p^-)$.
\end{lemma}
\begin{proof}
Assumption that $\cR'$ does not disclose  $\rho_j$, for some $j\in \myset{S}$. Then, there exist probabilities $p^+, q^+, p^-,q^- \in [0,1]  $ and extended vectors $v,v'\in \reals^{d+1}_{-\mathbf{0}}$ such that $$\rho \equiv p^-/p^+ <  q^-/q^+\equiv\rho',$$ while $\leak(v,p^+,p^-)= \leak (v',q^+,q^-)$. 
Consider any two $\cV,\cV'\subseteq \reals^{d+1}_{-\mathbf{0}}$ and $p,p'\in ([0,1]\times [0,1])^{|\myset S|} $ such that all item profiles and probabilities are identical for all $k\myset{S}$, but differ in $j$: the j-th elements in $\cV,p$ are $v$ and $(p^+,p^-)$, respectively, while the $j$-th elements of $\cV',p'$ are $v'$ and $(q^+,q^-)$, respectively. Observe that $\equiv\leak'(\cV,p) =\leak(\cV',p')$.

Recall that $\myset{S}_R'$ depends on $\myset{S}_0$, $x_0$, and the disclosure from the analyst. Hence, as $\equiv\leak'(\cV,p) =\leak(\cV',p')$, conditioned on $\myset{S}_0$, $\myset{S}_R'$ is identically distributed in both cases. In particular,
\begin{align}\Pm_{x,\cV,p}(j\in \myset{S}_R' \mid \myset{S}_0=\myset{A} ) = \Pm_{x,\cV',p'}(j\in \myset{S}_R' \mid \myset{\myset{S}_0}=\myset{A}),\label{out}\end{align}  
for all 
$\myset{A}\subseteq \myset{S}$.
As $\myset{S}_R'\subseteq \myset{S}_0$, we have
$\Pm_{x,\cV,p}(j=\myset{S}_R')  \stackrel{\eqref{product}}{=} Z \cdot p_j^{x_0}$
where $$Z_{x,\cV,p} = \!\!\!\!\!\!\sum_{\myset{A}\subseteq \myset{S}\setminus\{j\}}\!\!\!\!\!\!\Pm_{x,\cV,p}(j\!\in\! \myset{S}_R'\! \mid\! \myset{S}_0=\myset{A}\cup\{j\} )\prod_{k\in \myset{A}}\!\!p_k^{x_0}\!\!\!\!\!\!\!\!\!\prod_{k\in \myset{S}\setminus (\myset{A}\!\cup\! \{j\})}\!\!\!\!\!\!\!\!\!(1-p_k^{x_0}) $$
As $\cR'$ is privacy preserving, by Lemma~\ref{lem:atmost} we get that $Z \leq \min(1,\rho^{x_0})$. Repeating the same steps for $\Pm_{x,\cV',p'}(j=\myset{S}_R')$,  we get that also $Z_{x,\cV',p'}\leq \min(1,(\rho')^{x_0}\}$. By \eqref{out}, these are equal, and thus
 $$Z=Z_{x,\cV,p} = Z_{x,\cV',p'}\leq \min(1,(\rho)^{x_0},(\rho')^{x_0})$$
Recall that $\rho<\rho'$, by construction. If $\rho<1$, then for $x_0=+1$ we get $\min(1,(\rho)^{x_0},(\rho')^{x_0})= \rho$. Then,
$$\Pm_{(+1,\bx),\cV',p'}(j=\myset{S}_R') = Z q^+\! \!\leq\! \rho q^+\!\! <\! \min(1,\rho') q^+\! \!=\! \min (q^+\!,q^-)$$
and the lemma holds for $x_0=+1$, $v'$, and $(q^+,q^-)$. If $\rho\geq 1$, then for $x_0={-1}$ we get $\min(1,(\rho)^{x_0},(\rho')^{x_0})= (\rho')^{-1}$, and
$$\Pm_{(-1,\bx),\cV,p}(j=\myset{S}_R') = Z p^- \!\!\leq \! p^-\!/\rho'\! <\! \min(1,\rho^{-1}\!) p^-\!\! = \!\min (p^+\!,p^-)$$
and the lemma holds for $x_0=-1$, $v$, and $(p^+,p^-)$.
\end{proof}
The PNI property allows us to couple $\myset{S}_R'$ and $\myset{S}_R$ in a way that the latter dominates the former.
\begin{lemma}\label{lem:dom}  Let $\cR=(\leak,(\myset{S}_R,\obfs),\bhx)$ be the MPSS protocol, and $\cR'=(\leak',(\myset{S}_R',\obfs'),\bhx')$ a privacy-preserving, PNI protocol. Then, there exists a joint probability space in which $\myset{S}_R' \subseteq \myset{S}_R \subset \myset{S}_0$.
\end{lemma}
\begin{proof}
Recall that $\myset{S}=\myset{S}^+\cup\myset{S}^-$, where $\myset{S}^+,\myset{S}^-$ the sets of positive and negative items in \eqref{posneg}.
Let $\myset{S}_{0+}$ and $\myset{S}_{0-}$ the set of items rated by two users of type $x_0=+1$ and $x_0=-1$, respectively.
We construct $\myset{S}_{0+}, \myset{S}_{0-}$ on the same probability space as follows:
for each $j$, draw $X_j$ uniform in [0,1], and let $j\in \myset{S}_{0+}$ iff
$X_j\le p_j^+$ and $j\in \myset{S}_{0-}$ iff $X_i\le p_i^-$. The sets  $\myset{S}_{0+}, \myset{S}_{0-}$
can  be intersected in the obvious way with M(+) and M(-) yielding
\begin{align*}\myset{S}_{0+} = \myset{S}_{0+}^+ \cup \myset{S}_{0+}^-, \quad
\myset{S}_{0-} = \myset{S}_{0-}^+ \cup \myset{S}_{0-}^-
\end{align*}
Then we have, a.s.
$\myset{S}_{0+}^+ \supseteq \myset{S}_{0-}^+$ and $\myset{S}_{0+}^- \subseteq   \myset{S}_{0-}^-$
Now, we can construct the set $\myset{S}_R$ reported by MPSS on the same space by letting
$\myset{S}_R = \myset{S}_R^+ \cup \myset{S}_R^-$ 
where
$\myset{S}_R^+ = \myset{S}_{0-}^+$ and  $\myset{S}_R^- = \myset{S}_{0+}^-$.

Now apply any privacy preserving mechanism $\cR'$ to $\myset{S}_{0+}$ and $\myset{S}_{0-}$. This will yield
sets $\myset{Q}_{R+}$, $\myset{Q}_{R-}$, that can also be decomposed as above:
$$\myset{Q}_{R+} = \myset{Q}_{R+}^+ \cup \myset{Q}_{R+}^-, \quad \myset{Q}_{R-} = \myset{Q}_{R-}^+ \cup \myset{Q}_{R-}^- $$
The sets $\myset{Q}_{R+}$, $\myset{Q}_{R-}$, are not necessarily equal, but must satisfy the following the properties:
\begin{align}
&\!\!\myset{Q}_{R-}^+\! \subseteq \!\myset{S}_{0-}^+\!\! =\! \myset{S}_R^+, ~   \myset{Q}_{R+}^- \!\subseteq\! \myset{S}_{0+}^-\!\! =\! \myset{S}_R^- ,~ \text{(by construction),}\label{bound1}\\
&\!\!\myset{Q}_{R-}^+ \!\ed\! \myset{Q}_{R+}^+,~~    \myset{Q}_{R-}^- \!\ed\! \myset{Q}_{R+}^- ,~ \text{(by privacy)} \label{bound2} 
\end{align}
where $\ed$ denotes equality in distribution.
Define
$\myset{S}_R' \equiv \myset{Q}_{R-}^+ \cup \myset{Q}_{R+}^-.$
By \eqref{bound1},  we get
$ \myset{S}_R'\subseteq \myset{S}_R$ with probability 1. Moreover, by  \eqref{bound2} and the fact that $\cR'$ is PNI, we get that
$\myset{S}_R' \ed \myset{Q}_{R+} \ed \myset{Q}_{R-}$.
\end{proof}
We are ready to prove Theorem~\ref{thm:mpss}. Privacy follows directly from \eqref{mpssprob}. Theorem~\ref{privacytheorem} implies MPSS yields minimal $\ell_2$ loss conditioned on $\myset{S}_R$. Optimality conditioned on $\myset{S}_0$ follows from Lemma~\ref{lem:dom}, and the fact that the $\ell_2$ loss \eqref{lsqloss} is a monotone decreasing function of $\myset{S}_R$. Finally, any protocol that does not does not disclose $v_{j0}$, for some $j\in \myset{S}$ will lead to a higher loss by Theorem~\ref{privacytheorem}. Moreover, by Lemmas~\ref{lem:strict} and~\ref{lem:dom}, if a protocol $\cR'$  does not disclose $\rho_j = {p_j^-}/{p_j^+}$ for some $j\in \myset{S}$, there exist $x_0$, $\cV$, and $p$ for which one can construct a coupling of $\myset{S}_R'$ and $\myset{S}_R$ such that $\myset{S}_R'\subset \myset{S}_R$ with non zero probability; the minimality of the disclosure therefore follows, again from the fact that the $\ell_2$ loss \eqref{lsqloss} is decreasing in $\myset{S}_R$.\end{proof}
 
\end{document}